\documentclass{ieeeaccess}
\usepackage{cite}
\usepackage{amsmath,amssymb,amsfonts}
\usepackage{graphicx}
\usepackage{textcomp}
\usepackage{color}
\newtheorem{theorem}{Theorem}
\newtheorem{lemma}{Lemma}
\newtheorem{definition}{Definition}

\newtheorem{proof}{Proof}

\newtheorem{example}{Example}
\newtheorem{remark}{Remark}

\usepackage{bm}
\usepackage{booktabs}
\usepackage{multirow}
\usepackage{algorithm}
\usepackage{algpseudocode}
\usepackage{subfigure}
\usepackage{amssymb}
\usepackage{cite}
\newcommand{\tabcaption}{\def\@captype{table}\caption}
\newcommand{\tabincell}[2]{\begin{tabular}{@{}#1@{}}#2\end{tabular}}

\def\BibTeX{{\rm B\kern-.05em{\sc i\kern-.025em b}\kern-.08em
    T\kern-.1667em\lower.7ex\hbox{E}\kern-.125emX}}
\begin{document}
\history{Date of publication xxxx 00, 0000, date of current version xxxx 00, 0000.}
\doi{10.1109/ACCESS.2017.DOI}

\title{Cascaded Coded Distributed Computing Schemes Based on Placement Delivery Arrays}
\author{\uppercase{Jing Jiang}\authorrefmark{1},\uppercase{Lingxiao Qu\authorrefmark{1}}}
\address[1]{Guangxi Key Lab of Multi-source Information Mining $\&$ Security, Guangxi Normal University, Guilin 541004, China (e-mail: jjiang2008@hotmail.com, lxqu1110@outlook.com).}

%\tfootnote{This paragraph of the first footnote will contain support
%information, including sponsor and financial support acknowledgment. For
%example, ``This work was supported in part by the U.S. Department of
%Commerce under Grant BS123456.''}

%\markboth
%{Author \headeretal: Preparation of Papers for IEEE TRANSACTIONS and JOURNALS}
%{Author \headeretal: Preparation of Papers for IEEE TRANSACTIONS and JOURNALS}

\corresp{Corresponding author: Jing Jiang (e-mail: jjiang2008@hotmail.com)}

\tfootnote{The work of J. Jiang was in part supported by  NSFC (Nos. 11601096, 61672176), 2016GXNSFCA380021,
Guangxi Higher Institutions Program of Introducing 100 High-Level Overseas Talents,
Guangxi Collaborative Innovation Center of Multi-source Information Integration and Intelligent Processing, the Guangxi ``Bagui Scholar" Teams for Innovation and Research Project,
and the Guangxi Talent Highland Project of Big Data Intelligence and Application.}

\begin{abstract}
Li {\it et al}. introduced coded distributed computing (CDC) scheme to reduce the communication load
in general distributed computing frameworks such as MapReduce.
They also proposed cascaded CDC schemes where each output function is computed multiple times, and proved that such schemes achieved the fundamental trade-off between computation load  and communication load.
However, these schemes require exponentially large numbers of input files
and output functions when the number of computing nodes gets large.
In this paper, by using the structure of placement delivery arrays (PDAs),
we construct several infinite classes of cascaded CDC schemes.
We also show that the numbers of output functions in all the new schemes are only
a factor of the number of computing nodes,
and the number of input files in our new schemes
is much smaller  than that of input files in CDC schemes derived by Li {\it et al}.
\end{abstract}

\begin{keywords}
Coded distributed computing, MapReduce, placement delivery array
\end{keywords}

\titlepgskip=-15pt

\maketitle

\IEEEPARstart{W}ith the amount of data being generated increasing rapidly,
a computation task could have a huge amount of data to be processed.
Hence it becomes more and more difficult to complete such a task by using a single server.
Therefore, large scale distributed computing systems,
where a computation task is distributed to many servers, are becoming very relevant.
The MapReduce \cite{DG}, Hadoop \cite{AH} are two of the popular distributed computing frameworks,
and have wide application in many areas, for instance,
\cite{CH,CS,CSDWW,HZRS,LLPPR,LSR,LYMA,LYMA1,PLSSM,PRPA,YMA}.
In such frameworks,  in order to compute the output functions,
the computation consists of  three phases: map phase, shuffle phase and reduce phase.
In the map phase, distributed computing nodes process parts of the input data files locally,
generating some intermediate values according to their designed map functions.
In the shuffle phase, the nodes exchange the calculated intermediate values among each other,
in order to obtain enough values to calculate the final output results by using their designed reduce functions. In the reduce phase, each node computes its designed reduce functions by
using the intermediate values derived in the map phase and the shuffle phase.

In these frameworks, one has to spend  most of execution time on data shuffling.
For example, 33$\%$ of the execution time is spent on data shuffling in a Facebook's Hadoop cluster \cite{CZMJS}.
70$\%$ of the execution time is spent on data shuffling
when running ``SelfJoin'' on the Amazon EC2 cluster \cite{A}.
Coded distributed computing (CDC) introduced in \cite{LMYA} is an efficient approach
to reduce the communication load in shuffle phase.
The authors in \cite{LMYA} characterized a fundamental tradeoff between
``computation load" in the map phase and ``communication load"  in the shuffle phase
on homogeneous computing networks, i.e.,
all the nodes have the same storage, computing and communication capabilities.
All the schemes mentioned in this paper are based on homogeneous networks, unless otherwise specified.
There are many studies on CDC.
Some new CDC schemes are constructed by using combinatorial design in \cite{KR1,WCJ}.
In order to solve the straggler problem,
\cite{LLPPR,LAA,YWYT} proposed CDC schemes by using error-correcting codes.
The authors in \cite{SLC,WCJ1,WCJ2,XT} investigated CDC schemes in heterogeneous networks,
where nodes could have different storage, computation capabilities and communication constraints.

Observe that in \cite{LMYA}, the author  proposed CDC schemes
with $N={K\choose r}$ input files  and $Q={K\choose s}$ output functions,
where $K$ is the total number of computing nodes,
$r$ is the average number of nodes that map each file,
and $s$ is the number of nodes that compute each reduce function.
Obviously, the number of input files $N={K\choose r}$ and
the number of output functions $Q={K\choose s}$
increase too quickly with $K$ to be used in practice when $K$ is large.
That is, in order to achieve the gain of communication load,
a large number of input files and a large number of output functions
are needed.
In practical scenarios,
these requirements could significantly reduce
the promised gains of the method \cite{KR1}.
It is desired to design CDC schemes with smaller numbers of input files and output functions.
There are a few works paying attention to reducing the values of $N$ and $Q$,
for instance,  \cite{KR1}, \cite{WCJ},  \cite{YSW}.
%\cite{YYW}.
However, the number $s$ of nodes that compute each reduce function
in most of the known schemes  is a certain value,
for example, $s=1$ in the schemes  in \cite{KR1} and \cite{YSW},
$s=1$ or $s=t$ where $t | K$, i.e., $t$ is a factor of $K$, in the schemes in \cite{WCJ}.
We list these known CDC schemes in Table \ref{known-CDC}.
%In \cite{WCJ}, the authors gave some coded distributed computing schemes with
%$N=(\frac{K}{r})^{r}$ and $Q=({\frac{K}{s}})^{s}$ by using combinatorial design;
%{\color{blue}
%In practical scenarios, these requirements can significantly reduce
%the promised gains of the method.
%
%which leads to the requirement
%of a large number of Map functions, and hence a large number
%of output functions, in order to achieve the promised gain.
%}

{\begin{table*}
\center
\caption{Some known CDC schemes}\label{known-CDC}
%\small{
\resizebox{450pt}{50pt}{
  \begin{tabular}{|c|c|c|c|c|c|c|c|}
\hline
    % after \\: \hline or \cline{col1-col2} \cline{col3-col4} ...
Schemes and Parameters &
       \tabincell{c} { Number of  \\ Nodes $K$ } &
      \tabincell{c} { Computation  \\ Load $r$ } &
      \tabincell{c} { Replication  \\ Factor $s$ } &
      \tabincell{c} { Number of  \\ Files $N$ } &
      \tabincell{c} { Number of Reduce  \\ Functions $Q$ }&
      \tabincell{c} { Communication  \\ Load $L$} \\ \hline
\tabincell{c} {\cite{LMYA}, $K, r, s \in \mathbb{N}^+$ \\ with $1 \leq r, s \leq K$ }
      & $K$   & $r$ &  $s$  & ${K \choose r}$ & ${K \choose s}$
      & $\sum\limits_{\l=\max\{r+1,s\}}^{\min\{r+s,K\}}
      \frac{l{K \choose \l}{l-2 \choose r-1}{r \choose l-s}}{r{K \choose r}{K \choose s}}$ \\ \hline
\tabincell{c}{\cite{KR1}, $K, t \in \mathbb{N}^+$ \\ with $t\geq 2$ and  $t | K$}
      &$K$ & $t$ & $1$  &$(\frac{K}{r})^{r-1}$
      & $K$ & $\frac{1}{r-1}(1-\frac{r}{K})$\\ \hline
\tabincell{c}{ \cite{YSW}, $K, t \in \mathbb{N}^+$ \\ with $t\geq 2$ and  $t | K$}
      & $K$ &$K-t$ & $1$ & $(\frac{K}{r}-1)(\frac{K}{r})^{r-1}$
      & $K$& $\frac{1}{r-1}(1-\frac{r}{K})$\\ \hline
\tabincell{c}{ \cite{WCJ}, $K, t \in \mathbb{N}^+$ \\ with $t\geq 2$ and  $t | K$}
      & $K$ &$t$ & $t$ & $(\frac{K}{r})^{r-1}$ & $(\frac{K}{r})^{r-1}$
      &  $\frac{r^{r}(K-r)}{K^{r}(r-1)}
         + \sum\limits_{\l=2}^{r} (\frac{r}{K})^{r+\l}
         {r \choose \l}{\frac{K}{r} \choose 2}^{\l}\frac{2^{\l}\l}{2\l-1}$ \\ \hline

\end{tabular} }
\end{table*}}

However, in practice, the reduce functions are desired to be computed by multiple nodes,
which allows for consecutive MapReduce procedures as the reduce function outputs can act
as the input files for the next  MapReduce procedures \cite{ZCFSS}.
Such a kind of CDC schemes is always called {\it cascaded} CDC scheme.

In this paper, we focus on cascaded CDC schemes
with smaller numbers of input files and output functions.
\begin{itemize}
  \item[1)] Based on placement delivery array (PDA),
   which was introduced to construct coded caching schemes in \cite{YCTC},
   we propose a construction of cascaded CDC schemes.
      That is, given a known PDA, one can obtain a class of cascaded CDC schemes.
      We list three classes of new schemes in Table \ref{New-CDCs}.
  \item[2)] The numbers of output functions in all the new schemes are only
        a factor of the number of computing nodes $K$,
        and the number of input files in our new schemes
        is exponentially smaller in $K$ than that of the scheme in \cite{LMYA}.
  \item[3)] From the construction in this paper, we can obtain the schemes in \cite{YSW},
  i.e., our new schemes include the schemes in \cite{YSW} as a special case.
  In addition,  our new schemes include some schemes in  \cite{KR1}, \cite{LMYA} and \cite{WCJ}
  as a special case.
\end{itemize}

{\begin{table*}
\center
\caption{New CDC schemes}\label{New-CDCs}
\resizebox{450pt}{45pt}{
  \begin{tabular}{|c|c|c|c|c|c|c|c|}
\hline
    % after \\: \hline or \cline{col1-col2} \cline{col3-col4} ...
Schemes and Parameters &
       \tabincell{c} { Number of  \\ Nodes $K$ } &
      \tabincell{c} { Computation  \\ Load $r$ } &
      \tabincell{c} { Replication  \\ Factor $s$ } &
      \tabincell{c} { Number of  \\ Files $N$ } &
      \tabincell{c} { Number of Reduce  \\ Functions $Q$ }&
      \tabincell{c} { Communication  \\ Load $L$} \\ \hline
\tabincell{c} { Scheme 1, $K, r, s \in \mathbb{N}^+$ \\ with $1 \leq r, s \leq K$}
      & $K$   & $r$ &  $s$  & ${K \choose r}$ & $\frac{K}{\gcd{(K,s)}}$
     & $\frac{s}{r}(1-\frac{r}{K})$\\ \hline
\tabincell{c} {Scheme 2, $K, t, s \in \mathbb{N}^+$ \\ with $t\geq 2$, $t|K$ and $s \leq K$}
    & $K$ & $t$ & $s$ &$(\frac{K}{r})^{r-1}$ & $\frac{K}{\gcd{(K,s)}}$
    & $\frac{s}{r-1}(1-\frac{r}{K})$\\ \hline
\tabincell{c} { Scheme 3, $K, t, s \in \mathbb{N}^+$ \\ with $t\geq 2$, $t|K$ and $s \leq K$}
    & $K$ &$K-t$ & $s$ & $(\frac{K}{r}-1)(\frac{K}{r})^{r-1}$ & $\frac{K}{\gcd{(K,s)}}$
    & $\frac{s}{r-1}(1-\frac{r}{K})$\\ \hline
\end{tabular} }
\end{table*}}

The rest of this paper is organized as follows.
In Section \ref{preliminaries}, we formulate a  general distributed computing framework.
In Section \ref{New-CDC},
a construction  of cascaded CDC schemes is proposed.
In Section \ref{Perf},
we analyse the performance of our new schemes from the construction in Section \ref{New-CDC},.
Finally conclusion is drawn in Section \ref{conclusion}.

\section{Preliminaries}
\label{preliminaries}

In this section, we give a formulation of our problem.
In this system, there are $K$ distributed computing nodes $\mathcal{K} = \{0,1,\ldots,K-1\}$,
$N$ files $\mathcal{W}= \{w_0, w_1, \ldots, w_{N-1}\}$, 
where $w_n \in \mathbb{F}_{2^D}$ for any $n \in \{0,1,\ldots,N-1\}$, 
and $Q$ output functions $\mathcal{Q} = \{\phi_0, \phi_1, \ldots, \phi_{Q-1}\}$,
where $\phi_q: \mathbb{F}_{2^D}^N \rightarrow \mathbb{F}_{2^B}$ for any $q \in \{0,1,\ldots,Q-1\}$,
which maps all the files to a bit stream $u_q = \phi_q(w_0, w_1, \ldots, w_{N-1}) \in \mathbb{F}_{2^B}$.
The goal of node $k$ $(k \in \mathcal{K})$ is responsible for computing a subset of output functions,
denoted by a set $\mathcal{Q}_k \subseteq \mathcal{Q}$.

\Figure[t!](topskip=0pt, botskip=0pt, midskip=0pt)[width=3.3 in]{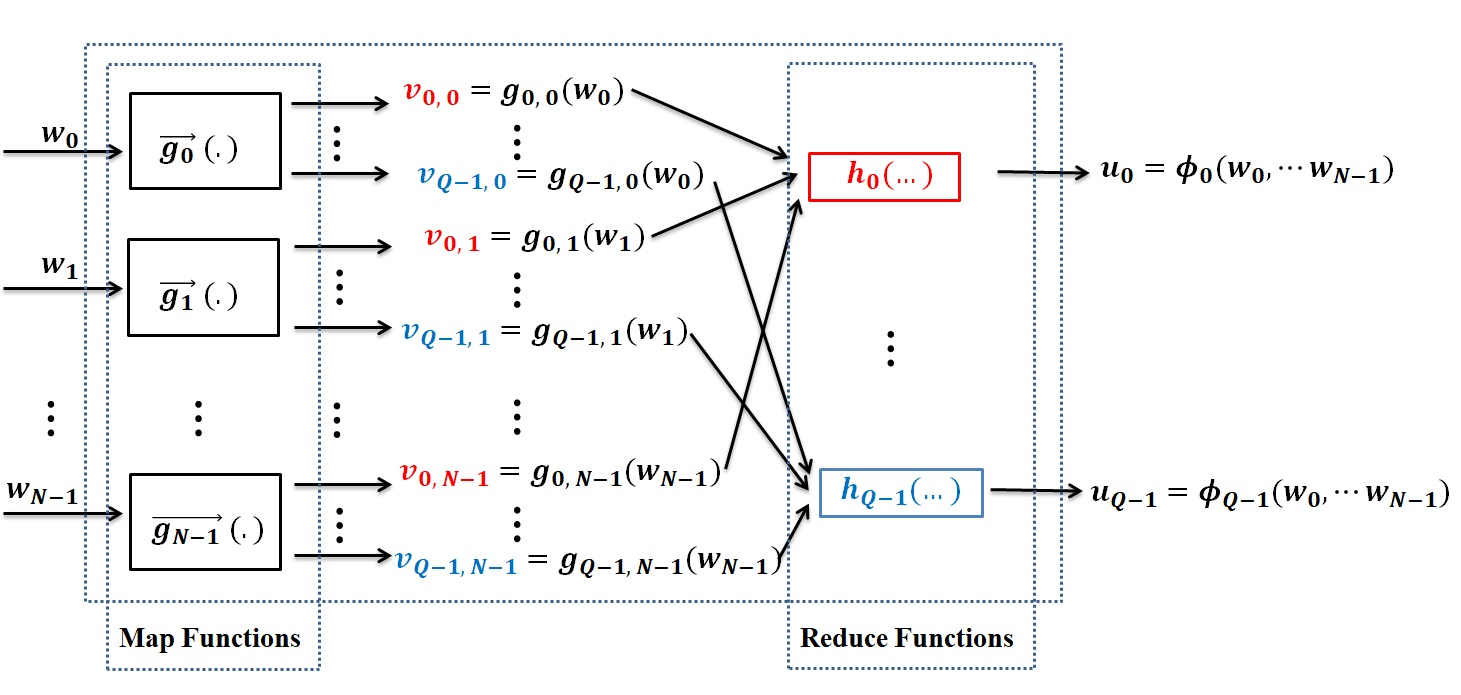}
{Illustration of a two-stage distributed computing framework \label{Fig.1}}

As illustrated in Fig. \ref{Fig.1} from  \cite{LMYA},
we assume that each output function $\phi_q$, $q \in \{0,1,\ldots,Q-1\}$, decomposes as:
$\phi_q(w_0, w_1, \ldots, w_{N-1})$ $=$
$h_q(g_{q,0}(w_0),$ $g_{q,1}(w_1),$ $\ldots,$ $g_{q,N-1}(w_{N-1}))$,
where

\begin{itemize}
\item[1)] The ``map'' function $g_{q,n}: \mathbb{F}_{2^D} \rightarrow \mathbb{F}_{2^T}$,
  $n \in \{0,$ $1,$ $\ldots,$ $N-1\}$, maps the file $w_n$ into the {\it intermediate value} (IVA)
  $v_{q,n}\triangleq g_{q,n}(w_n) \in \mathbb{F}_{2^T}$ for a given positive integer $T$.
\item[2)] The ``reduce'' function
   $h_{q}: \mathbb{F}_{2^T}^{N} \rightarrow \mathbb{F}_{2^B}$,
    maps the IVAs in $\{v_{q,n} \ | \ n\in \{0,1,\ldots,N-1\}\}$
    into the output value $u_q = h_{q}(v_{q,0}, v_{q,1}, \ldots, v_{q,N-1})$.
\end{itemize}

Following the above decomposition, the computation proceeds in the following three phases.

\begin{itemize}
  \item Map Phase. For each $k \in \mathcal{K}$,
  node $k$ computes  $g_{q,n}$ for $q \in \{0,1,\ldots,Q-1\}$ and $w_n\in \mathcal{W}_k$,
  where $\mathcal{W}_k \subseteq \mathcal{W}$ is the subset of files stored by $k$, i.e.,
  node $k$ locally computes a subset of IVAs
  \begin{equation*}\label{map-IVAs}
    \mathcal{C}_{k} = \{v_{q,n} \ | \ \phi_q \in \mathcal{Q}, w_n \in \mathcal{W}_k\}.
  \end{equation*}
%  For each file $w_n \in \mathcal{M}_k$,
%  node $k$ computes a subset of IVAs $\mathcal{C}_{k,n} = \{v_{q,n} : q \in \mathcal{Z}_{k,n}\}$,
%  where $\mathcal{Z}_{k,n} \subseteq \mathcal{K}$.
%  Denote the set of IVAs computed at node $k$ by $\mathcal{C}_{k}$ , i.e.,
%  $\mathcal{C}_{k} = \bigcup\limits_{n : w_n \in \mathcal{M}_k}(\mathcal{C}_{k,n})$.
  \item  Shuffle Phase. Denote $\mathcal{Q}_k$, $k \in \mathcal{K}$,
      as the subset of output functions which will be computed by node $k$.
      In order to obtain the output value of $\phi_q$ where $\phi_q \in \mathcal{Q}_k$,
      node $k$ needs to compute $h_{q}(v_{q,0}, v_{q,1}, \ldots, v_{q,N-1})$,
      i.e., it needs the IVAs that are not computed locally in the map phase.
      Hence, in this phase, the $K$ nodes exchange some of their computed IVAs.
      Suppose that node $k$ creates a message $X_k = \varphi_k(\mathcal{C}_{k})$
      of some length $l_k \in \mathbb{N}$, using a function
      $\varphi_k: \mathbb{F}_{2^T}^{|\mathcal{C}_{k}|} \rightarrow \mathbb{F}_{2^{\l_k}}$.
      Then it multicasts this message to all other nodes, where each node receives it error-free.
  \item Reduce Phase. Using the shuffled messages $X_0,$ $X_1,$ $\ldots,$ $X_{K-1}$
   and the IVAs in $\mathcal{C}_{k}$
   it computed locally in the map phase, node $k$ now could derive
   $(v_{q,0}, v_{q,1}, \ldots, v_{q,N-1}) = \psi_q(X_0, X_1, \ldots, X_{K-1}, \mathcal{C}_{k})$
   for some function
$\psi_q: \mathbb{F}_{2^{l_{0}}}\times\mathbb{F}_{2^{l_{1}}}\times\ldots\times \mathbb{F}_{2^{l_{K-1}}} \times \mathbb{F}_{2^T}^{|\mathcal{C}_{k}|} \rightarrow \mathbb{F}_{2^{T}}^N$
where $\phi_q \in \mathcal{Q}_k$.
More specifically,  node $k$ could derive the following IVAs
\begin{equation*}\label{reduce-IVAs-01}
   \{v_{q,n} \ | \ \phi_q \in \mathcal{Q}_k, n \in \{0,1,\ldots, N-1\}\},
\end{equation*}
which is enough to compute output value
$u_q$ $= h_{q}$ $(v_{q,0},$ $v_{q,1},$ $\ldots,$ $v_{q,N-1})$.
\end{itemize}

Define the {\it computation load} as $r=\frac{\sum_{k=0}^{K-1}|\mathcal{W}_k|}{N}$ and
{\it communication load} as $L=\frac{\sum_{k=0}^{K-1}l_k}{QNT}$,
i.e., $r$ is the average number of nodes that map each file
and $L$ is the ratio of the total number of bits transmitted in shuffle phase to $QNT$.
Li {\it et al}. \cite{LMYA} gave the following optimal computation-communication function.
\begin{equation}
\label{Ali-Com-load}
  L^{*}(r,s)
%  =\sum\limits_{l=\max\{r+1,s\}}^{\min\{r+s,K\}}
%  \frac{l{K \choose \l}{l-2 \choose r-1}{r \choose l-s}}{r{K \choose r}{K \choose s}}
  = \sum\limits_{l=\max\{r+1,s\}}^{\min\{r+s,K\}}
  \frac{{K-r \choose K-l}{r \choose l-s}}{{K \choose s}} \frac{l-r}{l-1}\\,
\end{equation}
where $K$ is the number of nodes, $r$ is the computation load and $s$ is the number of nodes that compute each reduce function.
Moreover, the authors proposed some schemes achieving the above
optimal computation-communication function.

\begin{lemma}({\cite{LMYA}})
\label{Li-scheme}
Suppose that $K$, $r$, and $s$ are positive integers.
Then there exists a CDC  scheme with $K$ nodes,
$N={K\choose r}$ files and $Q={K\choose s}$ output functions,
such that the  communication load is
\begin{equation*}
\label{eq-Li-scheme}
L_{Li}(r,s)
%=\sum\limits_{l=\max\{r+1,s\}}^{\min\{r+s,K\}}
%  \frac{l{K \choose \l}{l-2 \choose r-1}{r \choose l-s}}{r{K \choose r}{K \choose s}},
  = \sum\limits_{l=\max\{r+1,s\}}^{\min\{r+s,K\}}
  \frac{{K-r \choose K-l}{r \choose l-s}}{{K \choose s}} \frac{l-r}{l-1}\\,
  \end{equation*}
where $r$ is the computation load and $s$ is the number of nodes that compute each reduce function.
\end{lemma}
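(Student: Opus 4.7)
The plan is to exhibit an explicit file placement, output-function assignment, and coded shuffling strategy whose communication load evaluates to the stated sum. Since the files are indexed by $\binom{K}{r}$ objects and the reduces by $\binom{K}{s}$ objects, the natural bookkeeping is via subsets of $[K]=\{0,1,\ldots,K-1\}$.

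First, I would identify each file with an $r$-subset $\mathcal{T}\subseteq[K]$ and each reduce function with an $s$-subset $\mathcal{S}\subseteq[K]$; node $k$ stores $\mathcal{W}_k=\{w_\mathcal{T}:k\in\mathcal{T}\}$ in the map phase and is responsible for $\mathcal{Q}_k=\{\phi_\mathcal{S}:k\in\mathcal{S}\}$ in the reduce phase. This placement is balanced: every file is mapped by exactly $r$ nodes (so the computation load equals $r$), every reduce is assigned to exactly $s$ nodes, and after the map phase node $k$ holds $v_{\phi_\mathcal{S},w_\mathcal{T}}$ for every $\mathcal{S}$ and every $\mathcal{T}\ni k$.

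Second, I would design the shuffle phase by fixing a subset-size $l$ in the range $\max\{r+1,s\}\le l\le\min\{r+s,K\}$ and, for every $l$-subset $\mathcal{A}\subseteq[K]$, creating one coded multicast opportunity per triple $(\mathcal{A},\mathcal{T},\mathcal{S})$ with $\mathcal{T}\subseteq\mathcal{A}$, $|\mathcal{T}|=r$, and $\mathcal{S}\setminus\mathcal{T}\subseteq\mathcal{A}$, $|\mathcal{S}|=s$. Each IVA $v_{\phi_\mathcal{S},w_\mathcal{T}}$ of this type is known by every node in $\mathcal{T}$ and wanted by every node in $\mathcal{S}\cap\mathcal{A}\setminus\mathcal{T}$. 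I would split each such IVA into $l-1$ equal segments indexed by the nodes of $\mathcal{A}\setminus\mathcal{T}$ that are the intended receivers in other triples sharing $\mathcal{A}$, then transmit the bitwise XORs of these segments; by the standard Maddah-Ali/Niesen style argument, each receiver in $\mathcal{A}$ can subtract the IVAs it already knows to recover its intended segment, so every node in $\mathcal{A}$ gains information from a single broadcast.

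Third, I would compute the communication load by counting. For each $l$ in the stated range, the number of $(\mathcal{A},\mathcal{T},\mathcal{S})$ triples is proportional to $\binom{K}{l}\binom{l}{r}\binom{r}{l-s}$ (choose $\mathcal{A}$, then $\mathcal{T}\subseteq\mathcal{A}$, then the $l-s$ elements of $\mathcal{T}\setminus\mathcal{S}$, while $\mathcal{S}\setminus\mathcal{T}$ is forced to be the rest of $\mathcal{A}$ when $|\mathcal{S}\cup\mathcal{T}|=l$). Dividing by the total number of bits $QNT=\binom{K}{s}\binom{K}{r}T$ and simplifying using the identity $\binom{K}{l}\binom{l}{r}=\binom{K}{r}\binom{K-r}{l-r}=\binom{K}{r}\binom{K-r}{K-l}$, together with the segment factor $\frac{1}{l-1}$ and the receiver count $l-r$, should yield exactly the summand $\frac{\binom{K-r}{K-l}\binom{r}{l-s}}{\binom{K}{s}}\cdot\frac{l-r}{l-1}$.

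The routine part is the bookkeeping; the step requiring care is verifying that the segmentation is consistent across triples so that the XOR transmissions decode correctly at every receiver, and that the range $\max\{r+1,s\}\le l\le\min\{r+s,K\}$ is exactly the set of $l$ for which the triples $(\mathcal{A},\mathcal{T},\mathcal{S})$ are non-empty (the lower bound comes from needing $\mathcal{T}\subsetneq\mathcal{A}$ and $\mathcal{S}\cap\mathcal{A}\ne\emptyset$; the upper bound from $|\mathcal{S}\cup\mathcal{T}|=l\le K$ and $\mathcal{T}\subseteq\mathcal{A}$, $|\mathcal{S}\setminus\mathcal{T}|\le l-r$). Once decodability and the counting identity are both in hand, the claimed expression for $L_{Li}(r,s)$ follows.
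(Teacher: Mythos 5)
This lemma is imported from \cite{LMYA} and the paper gives no proof of it, so the only meaningful comparison is with the construction in that reference. Your placement ($w_{\mathcal{T}}$ stored at the nodes of the $r$-subset $\mathcal{T}$, $\phi_{\mathcal{S}}$ assigned to the nodes of the $s$-subset $\mathcal{S}$), your grouping of the shuffle by $l$-subsets $\mathcal{A}$ with $\mathcal{T}\cup\mathcal{S}=\mathcal{A}$, your identification of the range $\max\{r+1,s\}\le l\le\min\{r+s,K\}$, and your counting --- $\binom{K}{l}\binom{l}{r}\binom{r}{l-s}$ triples, the identity $\binom{K}{l}\binom{l}{r}=\binom{K}{r}\binom{K-r}{K-l}$, normalization by $QNT$ --- are exactly the Li \emph{et al.} scheme and do reproduce the stated summand \emph{once} the factor $\frac{l-r}{l-1}$ is justified by a working delivery scheme.

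That delivery step is precisely where your sketch has a genuine gap, and as written it would fail. You propose to split each IVA into $l-1$ segments ``indexed by the nodes of $\mathcal{A}\setminus\mathcal{T}$,'' but $|\mathcal{A}\setminus\mathcal{T}|=l-r\ne l-1$ in general, so the indexing is inconsistent; more importantly, for $s\ge 2$ each IVA in the group is wanted by all $l-r>1$ nodes of $\mathcal{A}\setminus\mathcal{T}$, so the Maddah-Ali--Niesen decoding (XOR of segments each wanted by a single receiver and known to all others) does not apply verbatim, and a plain subtract-what-you-know argument does not close. In \cite{LMYA} the factor $\frac{l-r}{l-1}$ arises differently: for each pair $(\mathcal{A},\mathcal{T})$ the $\binom{r}{l-s}$ IVAs wanted by exactly the nodes of $\mathcal{A}\setminus\mathcal{T}$ are concatenated and split into $r$ segments indexed by the nodes of $\mathcal{T}$ (the nodes that \emph{know} them), and each node $k\in\mathcal{A}$ broadcasts $\binom{l-2}{r-1}$ linear combinations of the $\binom{l-1}{l-r}$ segments it owns; each other node of $\mathcal{A}$ already knows $\binom{l-2}{l-r}$ of those segments and can therefore solve for the $\binom{l-2}{l-r-1}=\binom{l-2}{r-1}$ it needs. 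Summing $l\cdot\binom{l-2}{r-1}\cdot\frac{1}{r}\binom{r}{l-s}T$ over the $\binom{K}{l}$ choices of $\mathcal{A}$ gives exactly $\binom{K}{l}\binom{l}{r}\binom{r}{l-s}\frac{l-r}{l-1}T$, matching your count. Without this (or an equivalent) coding argument, your expression for the load is only a cut-set-style accounting of what must be delivered, not an achievability proof.
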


For convenience,  the above CDC  schemes are called Li-CDC schemes in this paper.
Obviously, in the Li-CDC schemes,
the numbers of input files $N={K\choose r}$ and output functions $Q={K\choose s}$
are too large to be of practical use.
In the next section, we will construct some classes of CDC schemes
with smaller numbers of input files and output functions.

\section{A new Construction of CDC schemes}
\label{New-CDC}

In this section, we will give a construction of cascaded CDC schemes by using
placement delivery array which was introduced by Yan {\it et al}. \cite{YCTC} to study coded caching schemes.

\begin{definition}(\cite{YCTC})
\label{def-PDA}
Suppose that $K,N, Z$ and $S$ are positive integers.
$\mathbf{P}=(p_{n,k})$, $ n \in \{0,1,\ldots,N-1\}, k \in \{0,$ $1,$ $\ldots,$ $K-1\}$,
is an $N\times K$ array
composed of a specific symbol $``*"$  and positive integers $0,1,\cdots, S-1$.
Then $\mathbf{P}$ is a $(K,N,Z,S)$ placement delivery array $($PDA for short$)$
if
\begin{enumerate}
  \item [1)] the symbol $``*"$ occurs exactly $Z$ times in each column;
  \item [2)] each integer appears at least once in the array;
  \item [3)] for any two distinct entries $p_{n_1,k_1}$ and $p_{n_2,k_2}$,    $p_{n_1,k_1}=p_{n_2,k_2}=u$ is an integer  only if
  \begin{enumerate}
     \item [a.] $n_1\ne n_2$, $k_1\ne k_2$, i.e., they lie in distinct rows and distinct columns; and
     \item [b.] $p_{n_1,k_2}=p_{n_2,k_1}=*$, i.e., the corresponding $2\times 2$  subarray formed by rows $n_1,n_2$ and columns $k_1,k_2$ must be of the following form
  \begin{eqnarray*}
    \left(\begin{array}{cc}
      u & *\\
      * & u
    \end{array}\right)~\textrm{or}~
    \left(\begin{array}{cc}
      * & u\\
      u & *
    \end{array}\right).
  \end{eqnarray*}
   \end{enumerate}
\end{enumerate}
\end{definition}

A $(K,N,Z,S)$ PDA $\mathbf{P}$ is $g$-{\it regular}, denoted as $g$-$(K,$ $N,$ $Z,$ $S)$ PDA,
if each integer in $\{0,1,\ldots,S-1\}$ occurs exactly $g$ times in $\mathbf{P}$.

\begin{example}
\label{exa-pda}
We can directly check that the following array is a $3$-$(6,4,2,4)$ PDA:
\begin{equation*}
\mathbf{P}_{4\times 6}=
\bordermatrix{
& 0& 1& 2& 3 & 4 &5 \cr
0& *&*&0&*&1&2 \cr
1 & *&0&*&1&*&3 \cr
2& 0&*&*&2 &3&*\cr
3& 1 &2&3&*&*&* \cr
}.
\end{equation*}
\end{example}

The concept of a PDA is a useful tool to construct coded caching schemes,
for instance,  \cite{CJWY}, \cite{CJYT}, \cite{SZG}, \cite{YCTC}.
Recently, Yan {\it et al}.\cite{YSW}, \cite{YYW} used PDAs to construct CDC schemes
with $s=1$, where $s$ is the number of nodes that compute each reduce function.

\begin{lemma}(\cite{YSW})
\label{lem-PDA-CDC-same}
Suppose that there exists a $g$-$(K,N,Z,S)$ PDA with $g\geq 2$.
Then there exists a CDC scheme satisfying the following properties:
\begin{itemize}
 \item[1)] it consists of $K$ distributed computing nodes $\mathcal{K}$ $=$ $\{0,$ $1,$ $\ldots,$ $K-1\}$,
     $N$ files and $Q=K$ output functions $\mathcal{Q} = \{\phi_0, \phi_1, \ldots, \phi_{Q-1}\}$;
 \item[2)] node $k$, where $k \in \mathcal{K} $, is responsible for computing $\phi_{k}$;
 \item[3)] the computation load is $r=\frac{KZ}{N}$
 and the number of IVAs multicasted by all the nodes is $\frac{gS}{g-1}$.
\end{itemize}
\end{lemma}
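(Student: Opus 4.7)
My plan is to use the PDA $\mathbf{P}=(p_{n,k})$ directly as a schedule specifying both the map‐phase placement and the shuffle‐phase coded multicast groups, in direct analogy with the PDA‐to‐coded‐caching correspondence of \cite{YCTC}, but reinterpreting rows as files and entries as IVA‐shuffle coordinates.

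\emph{Placement.} First I would declare node $k$ to map file $w_n$ iff $p_{n,k}=*$, and set $Q=K$ with $\mathcal{Q}_k=\{\phi_k\}$. Since each column of $\mathbf{P}$ contains exactly $Z$ stars, each node maps $Z$ files, so $\sum_{k}|\mathcal{W}_k|=KZ$ and the computation load equals $r=KZ/N$, as claimed.

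\emph{Shuffle.} For each integer $u\in\{0,1,\ldots,S-1\}$ I would list the $g$ entries of $\mathbf{P}$ equal to $u$ as $(n_1,k_1),\ldots,(n_g,k_g)$. Properties (2)–(3) of Definition~\ref{def-PDA} imply that the $n_i$ are pairwise distinct, the $k_j$ are pairwise distinct, and $p_{n_i,k_j}=*$ for every $i\neq j$. Consequently, after the map phase node $k_j$ has locally computed $\{v_{k_i,n_i}:i\neq j\}$ and is missing only the single IVA $v_{k_j,n_j}$ (which it did not map, because $p_{n_j,k_j}=u\neq *$). To deliver these missing IVAs I would split each IVA $v_{k_j,n_j}$ into $g-1$ equal sub-packets $v_{k_j,n_j}^{(1)},\ldots,v_{k_j,n_j}^{(g-1)}$ of $T/(g-1)$ bits each (using $g\geq 2$), and then, for each $j=1,\ldots,g$, have node $k_j$ broadcast a single XOR of one sub-packet of $v_{k_i,n_i}$ for every $i\neq j$, with the sub-packet indices chosen via the standard Maddah-Ali/Niesen assignment so that every other node in the group recovers a fresh, distinct sub-packet of its target IVA. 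Over the $g$ transmissions associated with $u$, each node $k_j$ collects all $g-1$ sub-packets of $v_{k_j,n_j}$ and reconstructs the full IVA. Summing over the $S$ integers yields $g S/(g-1)$ IVAs multicasted in total, matching the statement.

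The main technical point, and the only place where routine verification is needed, is exhibiting the canonical sub-packet indexing so that the $g$ XORs associated with each integer jointly deliver $g-1$ distinct sub-packets of $v_{k_j,n_j}$ to each $k_j$. This reduces to choosing a fixed Latin-square-style schedule on $\{1,\ldots,g\}\times\{1,\ldots,g\}$ with the diagonal removed; once fixed as a function of $g$, the same rule applies uniformly to every integer $u$, and decoding follows because each receiving node can cancel exactly the $g-2$ sub-packets it already holds from the map phase. The PDA axioms are used precisely to guarantee that these cancellations are well defined and that each multicast group is internally consistent.
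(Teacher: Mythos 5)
Your proposal is correct and follows essentially the same route as the paper: the paper (in its proof of Lemma~\ref{lem-PDA-CDC}, which specializes to this statement by taking $q_k=k$) uses the identical placement $\mathcal{W}_k=\{w_n : p_{n,k}=*\}$, splits each missing IVA $v_{k_j,n_j}$ into $g-1$ segments indexed by the other nodes $k_l$, $l\neq j$, of the group for integer $u$, and has node $k_l$ multicast $\bigoplus_{t\neq l} v_{k_t,n_t}^{(k_l)}$, which is exactly the canonical sub-packet assignment you describe. The only difference is presentational: the paper makes the indexing explicit (segments labeled by the transmitting node) where you defer it to a ``Latin-square-style schedule,'' but the content and the counting $gS/(g-1)$ coincide.
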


In the above scheme, the numbers of nodes and files  are corresponding to the numbers of columns and rows of the PDA, respectively.
Furthermore, node $k$, $k\in\{0,1,\ldots,K-1\}$, is responsible for computing $\phi_{k}$,
i.e., distinct nodes are responsible for computing distinct output functions.
So the number $Q$ of output functions and the number of nodes are the same, i.e., $Q=K$.
In order to obtain the main results of this section,
the property that  some nodes are responsible for computing the same output function is needed.

\begin{example}
\label{exa-pda-CDC}
Consider the $3$-$(6,4,2,4)$ PDA $\mathbf{P}_{4\times 6}$ in Example \ref{exa-pda}.
By using $\mathbf{P}_{4\times 6}$, we will construct a CDC scheme with
$K=6$ nodes $\{0,1,2,3,4,5\}$, $N=4$ files $\{w_0,w_1,w_2,w_3\}$,
$Q=2$ output functions $\{\phi_0,\phi_1\}$
(note that $Q=K=6$ in Lemma \ref{lem-PDA-CDC-same}).

\begin{itemize}
  \item Map phase. For each $k \in \{0,1,2,3,4,5\}$, node $k$ stores the files in
\begin{equation}
\label{eq-ex-map-files}
\mathcal{W}_k=\{w_n \ | \ n \in\{0,1,2,3\},p_{n,k}=*\ \},
\end{equation}
i.e.,

\begin{eqnarray*}
  \mathcal{W}_0=\{w_0,w_1\},  \mathcal{W}_1=\{w_0,w_2\}, \mathcal{W}_2=\{w_1,w_2\},  \\ \mathcal{W}_3=\{w_0,w_3\},  \mathcal{W}_4=\{w_1,w_3\}, \mathcal{W}_5=\{w_2,w_3\}.
\end{eqnarray*}
  \item Shuffle phase. For each $k \in \{0,1,3,4\}$, node $k$ is responsible for computing $\phi_0$,
  i.e., it needs to obtain output value $u_{0}=h_0(v_{0,0},v_{0,1},v_{0,2},v_{0,3})$.
  For each $k \in \{2,5\}$, node $k$ is responsible for computing $\phi_1$.
  That is, node $k$ is responsible for computing $\phi_{q_k}$, where
  $(q_0,q_1,q_2,q_3,q_4,q_5)=(0,0,1,0,0,1)$.
  We take node $0$ as an example,
i.e.,
in order to compute the output value $u_{0}$,  node $0$ should obtain all the IVAs in
$\{v_{0,n}$ $\ | \ $ $n$ $\in$ $\{0,1,2,3\}\}.$
Since node $0$ stores $w_0$ and $w_1$, it can locally compute the IVAs in
\begin{equation}
\label{eq-local}
\{v_{0,n} \ | \ n\in \{0,1\}\}.
\end{equation}
Since $p_{2,0}=0$, node $0$ can derive $v_{0,2}$ from
the following subarray $\mathbf{P}^{0}$ formed by the rows
the rows $n_1$, $n_2$, $n_3$ and columns $k_1$,$k_2$,$k_3$,
where $p_{n_1,k_1}=p_{n_2,k_2}=p_{n_3,k_3}=0$.
\begin{equation*}
\label{eq-trans}
\mathbf{P}^{0}=
\bordermatrix{
  & 0& 1& 2\cr
0& *&*&0 \cr
1 & *&0&* \cr
2& 0&*&* \cr
}
\end{equation*}
Observe $\mathbf{P}^{0}$.
From \eqref{eq-ex-map-files}, $p_{n,k}=*$ means that the node $k$ can locally compute $v_{q_k,n}$.
So, from $\mathbf{P}^{0}$, we know that
node $0, 1$ and $2$ do not know $v_{q_0,2}=v_{0,2}$, $v_{q_1,1}=v_{0,1}$
and $v_{q_2,0}=v_{1,0}$, respectively.
Divide $v_{0,2}$, $v_{0,1}$ and $v_{1,0}$ into $2$ disjoint segments, respectively, i.e.,
\begin{equation*}
v_{0,2}=(v_{0,2}^{(1)},v_{0,2}^{(2)}), v_{0,1}=(v_{0,1}^{(0)},v_{0,1}^{(2)}), v_{1,0}=(v_{1,0}^{(0)},v_{1,0}^{(1)}).
\end{equation*}
For a segment $v_{q,n}^{(k)}$, the superscript $k$ represents the node that can locally compute this IVA,
which implies the segment $v_{q,n}^{(k)}$ will be transmitted by node $k$.
That is
\begin{itemize}
  \item node $0$ multicasts the message $v_{0,1}^{(0)}\bigoplus v_{1,0}^{(0)}$ to nodes $1$ and $2$,
  \item node $1$ multicasts the message $v_{0,2}^{(1)}\bigoplus v_{1,0}^{(1)}$ to nodes $0$ and $2$.
  \item node $2$ multicasts the message $v_{0,2}^{(2)}\bigoplus v_{0,1}^{(2)}$ to nodes $0$ and $1$.
\end{itemize}
Since node $0$ can compute $v_{1,0}$ and $v_{0,1}$ locally,
it can derive $v_{0,2}^{(1)}$ and $v_{0,2}^{(2)}$
from the messages $v_{0,2}^{(1)}\bigoplus v_{1,0}^{(1)}$  multicasted by  node $1$
and $v_{0,2}^{(2)}\bigoplus v_{0,1}^{(2)}$ multicasted by node $2$, respectively.
This implies that node $0$ can obtain $v_{0,2} =(v_{0,2}^{(1)},v_{0,2}^{(2)})$.
By using similar method, node $0$ can obtain $v_{0,3}$.
Together with the known IVAs in \eqref{eq-local},
node $0$ can obtain all the IVAs in $\{v_{0,n} \ | \  n\in \{0,1,2,3\}\}.$
Similarly, node $k$, $k \in \{1,3,4\}$ can obtain all the IVAs in
$\{v_{0,n} \ | \  n \in \{0,1,2,3\}\}$
and  node $k$, $k \in \{2,5\}$ can obtain all the IVAs in
$\{v_{1,n} \ | \  n \in \{0,1,2,3\}\}$.

\item  Reduce phase. By using the IVAs derived in map phase and shuffle phase,
for each $k_1 \in \{0,1,3,4\}$ and $k_2 \in \{2,5\}$,
node $k_1$ and $k_2$ can compute output values $u_0$ and $u_1$, respectively.
Since each node stores $2$ files,
the computation load is $r=\frac{2\times 6}{N}=\frac{2\times 6}{4}=3$.
For each integer $u\in \{0,1,2,3\}$, the number of IVAs multicasted by  the nodes from
$\mathbf{P}^{u}$ is $\frac{1}{2}\times3$.
So the total number of IVAs multicasted by  all the nodes  is $4 \times \frac{1}{2}\times 3 = 6$.
\end{itemize}
\end{example}

\begin{lemma}
\label{lem-PDA-CDC}
Suppose that there exists a $g$-$(K,N,Z,S)$ PDA with $g\geq 2$.
Then there exists a CDC scheme satisfying the following properties:
\begin{itemize}
  \item[1)] it contains $K$ distributed computing nodes $\mathcal{K}$ $=$ $\{0,$ $1,$ $\ldots,$ $K-1\}$,
  $N$ files and $Q$ output functions $\mathcal{Q}=\{\phi_0,\phi_1,\ldots,\phi_{Q-1}\}$, where $Q\leq K$;
  \item[2)] node $k$ is responsible for computing $\phi_{q_{k}}$, where $\phi_{q_{k}} \in \mathcal{Q}$;
  \item[3)] the computation load is $r=\frac{KZ}{N}$ and the number of IVAs multicasted by all the nodes is $\frac{gS}{g-1}$.
\end{itemize}
\end{lemma}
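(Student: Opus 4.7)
The plan is to mimic the construction in Example \ref{exa-pda-CDC}, which already handles a non-bijective assignment of output functions to nodes, and read off from the PDA properties why the shuffle phase decodes correctly regardless of that assignment.

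First, in the map phase I place files by stars: node $k$ stores $\mathcal{W}_k=\{w_n : p_{n,k}=*\}$. Since each column contains exactly $Z$ stars, $|\mathcal{W}_k|=Z$, and therefore the computation load is $r=\sum_{k=0}^{K-1}|\mathcal{W}_k|/N=KZ/N$ as required by (3). Next, I fix any surjection $k\mapsto q_k$ from $\{0,1,\ldots,K-1\}$ onto $\{0,1,\ldots,Q-1\}$ (possible because $Q\le K$), and declare that node $k$ is responsible for $\phi_{q_k}$. Note that node $k$ can compute locally exactly the IVAs $\{v_{q_k,n} : p_{n,k}=*\}$, and still needs $v_{q_k,n}$ for every $n$ with $p_{n,k}$ an integer.

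For the shuffle phase, I iterate over each integer $u\in\{0,1,\ldots,S-1\}$. By $g$-regularity, $u$ appears at exactly $g$ positions $(n_1,k_1),\ldots,(n_g,k_g)$; by PDA property (3a) the $n_j$'s are pairwise distinct and so are the $k_j$'s, and by PDA property (3b), for every $i\neq j$ one has $p_{n_j,k_i}=*$, so node $k_i$ stores $w_{n_j}$ and can compute $v_{q_{k_j},n_j}$. I split each of the $g$ needed IVAs $v_{q_{k_j},n_j}$ into $g-1$ equal segments indexed by the other nodes, $v_{q_{k_j},n_j}=(v_{q_{k_j},n_j}^{(k_i)})_{i\neq j}$, and let node $k_i$ multicast the single packet $\bigoplus_{j\neq i} v_{q_{k_j},n_j}^{(k_i)}$. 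Node $k_j$ then isolates its missing segment $v_{q_{k_j},n_j}^{(k_i)}$ from node $k_i$'s broadcast by XOR-cancelling the other $g-2$ terms, each of which it can compute locally by the argument above. Ranging over $i\neq j$, node $k_j$ reassembles $v_{q_{k_j},n_j}$. Combined with what it already has locally and doing this for all integers $u$, node $k$ finishes the reduce phase by evaluating $h_{q_k}$.

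Counting is immediate: each of the $S$ integers contributes $g$ segments of size $1/(g-1)$ of one IVA, so the total number of IVAs transmitted is $gS/(g-1)$, matching (3). The main thing to verify carefully is that the coded-multicast step really does not see the mapping $k\mapsto q_k$: each node $k_i$ needs only that it stores $w_{n_j}$ for $i\neq j$, which is exactly the two-by-two structural guarantee of Definition \ref{def-PDA}(3b) and is independent of which $\phi_{q_{k_j}}$ the recipient is computing. So the only substantive obstacle is bookkeeping, not mathematics: once one checks that the decoding step inside each coded packet relies solely on the PDA (not on $q_{k_j}$ being injective in $j$), the generalization from Lemma \ref{lem-PDA-CDC-same} (the case $Q=K$, $q_k=k$) to arbitrary $Q\leq K$ is immediate.
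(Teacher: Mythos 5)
Your proposal is correct and follows essentially the same route as the paper's own proof: star-based file placement giving $r=KZ/N$, a surjective assignment $k\mapsto q_k$, splitting each needed IVA $v_{q_{k_j},n_j}$ into $g-1$ segments, XOR multicasts $\bigoplus_{t\neq l}v_{q_{k_t},n_t}^{(k_l)}$, and decoding via the off-diagonal stars guaranteed by Definition \ref{def-PDA}(3b), yielding $gS/(g-1)$ transmitted IVAs. Your explicit observation that the decoding step depends only on the PDA structure and not on injectivity of $k\mapsto q_k$ is exactly the point the paper makes implicitly in generalizing Lemma \ref{lem-PDA-CDC-same}.
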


We now pay our attention to the proof of Lemma \ref{lem-PDA-CDC}.
Given a  $g$-$(K,N,Z,S)$ PDA $\mathbf{P}=(p_{n,k})$, $n \in $ $\{0,$ $1,$ $\ldots,$ $N-1\}$,
$k\in \{0,1,\ldots,K-1\}$,
we can construct a CDC scheme with
$K$ nodes $\mathcal{K}=\{0,1,\ldots, K-1\}$,
$N$ files $\mathcal{W}= \{w_0, w_1, \ldots, w_{N-1}\}$
and $Q$ output functions $\mathcal{Q} = \{\phi_0, \phi_1, \ldots, \phi_{Q-1}\}$,
where node $k \in \mathcal{K}$ is responsible for computing $\phi_{q_{k}}$
such that $\phi_{q_{k}}\in \mathcal{Q}$ and $\cup_{k\in \mathcal{K}}\phi_{q_{k}}=\mathcal{Q}$.

\begin{itemize}
  \item Map phase. Node $k$, where $k \in \{0,1,\ldots,K-1\}$, stores the files in
\begin{equation}
\label{map-files-1}
\mathcal{W}_k=\{w_n \ | \ n \in \{0,1,\ldots,N-1\},p_{n,k}=*\}.
\end{equation}
Hence the node $k$ can compute the IVAs in the set
\begin{equation}\label{map-IVAs-01}
\bigcup\limits_{p_{n,k}=*}v_{q_k,n}.
\end{equation}

We can also obtain the computation load
\begin{eqnarray*}
\begin{split}
r =\frac{\sum_{k=0}^{K-1}|\mathcal{W}_k|}{N}=\frac{\sum_{k=0}^{K-1}Z}{N} = \frac{KZ}{N}.
\end{split}
 \end{eqnarray*}

\item Shuffle phase. Note that node $k$, where $k$ $\in $ $\{0,$ $1,$ $\ldots,$ $K-1\}$,
     is responsible for computing the output function $\phi_{q_k}$.
     According to the definition of $g$-regular,
     each integer $u \in \{0,1,\ldots, S-1\}$ occurs $g$ times in $\mathbf{P}$.
     Suppose that $p_{n_1,k_1}=p_{n_2,k_2}=\ldots = p_{n_g,k_g}=u.$
     From condition 3) of Definition \ref{def-PDA},
     the subarray of $\mathbf{P}^{u}$  formed by the rows $n_1, n_2, \ldots, n_g$ and columns
    $k_1, k_2, \ldots, k_g$ is of the following form:
\begin{equation}
\label{eq-shuffle-u}
\mathbf{P}^{u}=
\bordermatrix{
  & k_1& k_2& \cdots & k_g\cr
n_1& u&*&\cdots & *\cr
n_2 & *&u&\cdots & * \cr
\vdots& \vdots&\vdots&\vdots \cr
n_g& *&*&\cdots & u \cr
}.
\end{equation}
Suppose that node $k_j$, $j \in \{1,2,\ldots,g\}$, is responsible for computing $\phi_{q_{k_j}}$.
Divide $v_{q_{k_j},n_j}$ into $g-1$ segments, i.e.,
\begin{equation}
\label{eq-IVA-seg}
v_{q_{k_j},n_j}=(v_{q_{k_j},n_j}^{(k_1)}, \ldots, v_{q_{k_j},n_j}^{(k_{j-1})},
v_{q_{k_j},n_j}^{(k_{j+1})}, \ldots , v_{q_{k_j},n_j}^{(k_{g})}).
\end{equation}
The superscript $k_{l}$, $l \in \{1,2,\ldots,g\}$, of a segment means that
such a segment will be transmitted by $k_{l}$.
For any $l \in \{1,2,\ldots,g\}$, the node $k_{l}$ multicasts the following message:
\begin{equation}
\label{shuffle-IVAs-02}
\bigoplus_{t\in\{1,2,\ldots, g\}\setminus\{l\}}v_{q_{k_{t}},n_{t}}^{(k_{l})}.
\end{equation}

Hence the number of IVAs multicasted by the nodes $k_1, k_2, \ldots, k_g$ is $\frac{g}{g-1}$
for the integer $u$.
Since there are $S$ integers in $\{0,1,\ldots,S-1\}$,
the total number of IVAs multicasted by all the nodes are $\frac{gS}{g-1}$.

In order to show the correctness of the scheme in the reduce phase,
we now prove that for any $j$ $\in$ $\{1,$ $2,$ $\ldots,$ $g\}$,
the node $k_{j}$ will obtain the IVAs $v_{q_{k_{j}},n_j}$ from $\mathbf{P}^{u}$ in \eqref{eq-shuffle-u},
where $p_{n_j,k_j}=u$.

\begin{enumerate}

\item  Since $p_{n_{1},k_1}$ $=$ $p_{n_2,k_2}$ $=$ $\ldots$ $=$ $p_{n_g,k_g}$ $=$ $u$,
according to the definition of a PDA, for any $j$ $\in$ $\{1,$ $2,$ $\ldots,$ $g\}$,
we have $p_{n_{t},k_{j}}=*$ for any $t$ $\in$ $\{1,$ $2,$ $\ldots,$ $g\}$ $\setminus$ $\{j\}$.
Then node $k_{j}$ stores file $w_{n_t}$ from \eqref{map-files-1},
which implies that it can locally compute $v_{q_{k_{l}},n_t}$ for any $l \in \{1,2,\ldots,g\}$.
So $k_{j}$ can compute $v_{q_{k_{t}},n_t}$ for any $t \in \{1,2,\ldots,g\}\setminus\{j\}$.
\item  We take $k_1$ as an example, i.e., node $k_{1}$ will obtain the IVA $v_{q_{k_{1}},n_1}$.
According to \eqref{eq-IVA-seg}, it need segments $v_{q_{k_{1}},n_1}^{(k_{2})},$ $ v_{q_{k_{1}},n_1}^{(k_{3})}, $ $\dots,$ $v_{q_{k_{1}},n_1}^{(k_{g})}$.
For the segment $v_{q_{k_{1}},n_1}^{(k_{2})}$,
from \eqref{shuffle-IVAs-02}, node $k_{2}$ multicasts the message
$\bigoplus_{t\in\{1,3,4,\ldots,g\}}v_{q_{k_{t}},n_t}^{(k_{2})}$.
From 1), $k_1$ can compute $v_{q_{k_{t}},n_t}$ for any $t \in \{2,3,\ldots,g\}$,
which implies it can locally compute
$v_{q_{k_{t}},n_t}^{(k_{2})}$ for any $t \in \{3,4,\ldots,g\}$.
So the node $k_{1}$ can obtain the segment $v_{q_{k_{1}},n_1}^{(k_{2})}$
from the  message $\bigoplus_{t\in\{1,3,4,\ldots,g\}}v_{q_{k_{t}},n_t}^{(k_{2})}$
multicasted by $k_2$.
Similarly, the node $k_{1}$ can obtain the segment $v_{q_{k_{1}},n_1}^{(k_{l})}$
for any $l \in \{3,4,\ldots,g\}$
from the  message $\bigoplus_{t\in\{1,2,\ldots,g\} \setminus \{l\}}v_{q_{k_{t}},n_t}^{(k_{l})}$
multicasted by $k_{l}$.
Now the node $k_1$  recovers the IVA
\begin{equation*}
v_{q_{k_1},n_1}=(v_{q_{k_1},n_1}^{(k_2)}, v_{q_{k_1},n_1}^{(k_{3})}, \ldots, v_{q_{k_1},n_1}^{(k_{g})}).
\end{equation*}
Similarly, for any $j \in \{2,3,\ldots,g\}$,
node $k_{j}$ could recover $v_{q_{k_{j}},n_{j}}$ from $\mathbf{P}^{u}$ in \eqref{eq-shuffle-u}.
\end{enumerate}

\item Reduce phase:
Consider node $k$, where $k \in \{0,1,\ldots, K-1\}$.
Since the node $k$ is responsible for computing $\phi_{q_{k}}$,
it needs to know the IVAs in the set
\begin{eqnarray*}
\label{map-IVAs-04}
\begin{split}
   & \{v_{q_k,n} \ | \ n \in \{0,1,\ldots,N-1\}\} \\
   & \ \ \ \ \ \ \ \ \ \ \ \ \ \ \   = (\bigcup\limits_{p_{n,k}=*}v_{q_k,n}) \bigcup(\bigcup\limits_{p_{n,k}\neq *}v_{q_k,n}) .
\end{split}
\end{eqnarray*}
From \eqref{map-IVAs-01}, the node $k$ can locally compute $\bigcup\limits_{p_{n,k}=*}v_{q_k,n}$.
Hence it only needs to derive $\bigcup\limits_{p_{n,k}\neq *}v_{q_k,n}$.
For any $p_{n,k}\neq*$, there exists an integer $u \in \{0,1,\ldots,S-1\}$ such that $p_{n,k}=u$.
From the shuffle phase, the node $k$ can get  the IVA $v_{q_k,n}$
from $\mathbf{P}^{u}$ in \eqref{eq-shuffle-u}.
That is node $k$ can derive all the IVAs in $\bigcup\limits_{p_{n,k}\neq *}v_{q_k,n}$.

\end{itemize}

Next, we will use Lemma \ref{lem-PDA-CDC} to derive some cascaded CDC schemes
where  the parameter $s$ of such schemes is a positive integer such that  $s >1$.

\begin{theorem}
\label{th-PDA-CDC}
Suppose that there exists a $g$-$(K,N,Z,S)$ PDA with $g\geq 2$.
Then for any  positive integer $s$ with $s\leq K$,
there exists a cascaded CDC scheme consisting of
$K$ distributed computing nodes, $N$ files and $Q=\frac{K}{\gcd{(K,s)}}$ output functions
such that the computation load is $r=\frac{KZ}{N}$ and the communication load is
$L=\frac{gsS}{(g-1)KN}$.
%=r\frac{sgS}{K^2Z(g-1)}$.
\end{theorem}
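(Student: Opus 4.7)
The plan is to upgrade Lemma \ref{lem-PDA-CDC} to a cascaded version by running its shuffle phase several times, each time with a different cyclic assignment of output functions to nodes. Let $d=\gcd(K,s)$, $Q=K/d$, and $s'=s/d$, and define
\begin{equation*}
\mathcal{Q}_k=\{\phi_{(k+i)\bmod Q}:i=0,1,\ldots,s'-1\},\qquad k\in\mathcal{K}.
\end{equation*}
First I would verify that this is a valid cascaded assignment: each $\phi_q$ is computed by every $k$ satisfying $k\equiv q-i\pmod Q$ for some $i\in\{0,\ldots,s'-1\}$, giving $s'$ residue classes mod $Q$ (distinct because $s'\leq Q$ follows from $s\leq K$), each class containing $K/Q=d$ nodes, for a total of $s'd=s$ nodes per function. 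In particular the reduce replication is exactly $s$, matching the cascaded specification.

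For the map phase I would reuse Lemma \ref{lem-PDA-CDC}: node $k$ stores $\mathcal{W}_k=\{w_n:p_{n,k}=*\}$ and computes every $v_{q,n}$ with $p_{n,k}=*$, for all $q\in\{0,1,\ldots,Q-1\}$. This keeps the computation load $r=KZ/N$ unchanged. Next I would split the shuffle phase into $s'$ independent rounds. In round $i\in\{0,1,\ldots,s'-1\}$ I treat node $k$ as if it were responsible for the single output function $\phi_{(k+i)\bmod Q}$ and invoke the shuffle protocol of Lemma \ref{lem-PDA-CDC}. The map $k\mapsto(k+i)\bmod Q$ surjects onto $\{0,\ldots,Q-1\}$, so the hypothesis $\cup_{k}\phi_{q_k^{(i)}}=\mathcal{Q}$ holds, and Lemma \ref{lem-PDA-CDC} guarantees that at the end of round $i$ node $k$ has recovered every $v_{(k+i)\bmod Q,n}$ for $n\in\{0,\ldots,N-1\}$. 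Running all $s'$ rounds therefore supplies node $k$ with every IVA it needs to evaluate each reduce function in $\mathcal{Q}_k$.

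The communication load follows from additivity. Lemma \ref{lem-PDA-CDC} transmits $gS/(g-1)$ IVAs, i.e.\ $gST/(g-1)$ bits, per round, and the rounds concern pairwise distinct $q$ indices, so their shuffle messages are independent. Hence
\begin{equation*}
\sum_{k=0}^{K-1}l_k=\frac{s'gST}{g-1},
\end{equation*}
and dividing by $QNT$ gives
\begin{equation*}
L=\frac{s'gS}{(g-1)QN}=\frac{(s/d)\,gS}{(g-1)(K/d)\,N}=\frac{gsS}{(g-1)KN},
\end{equation*}
as claimed.

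The main obstacle I anticipate is purely bookkeeping. The non-trivial check is that the $s'$ cyclic rounds of Lemma \ref{lem-PDA-CDC} together realise the cascaded assignment $\mathcal{Q}_k$ with the correct replication $s$, and that their shuffle messages do not overlap (so the total bit count is simply $s'$ times the single-round bit count). Once these two facts are confirmed, the remainder of the argument is arithmetic with $\gcd(K,s)$, and the theorem follows.
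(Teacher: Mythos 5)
Your proposal is correct and follows essentially the same route as the paper: both reduce the cascaded scheme to $e=s/\gcd(K,s)$ independent invocations of the single-function shuffle of Lemma \ref{lem-PDA-CDC} under a cyclic assignment of output functions, and both obtain $L$ by dividing the resulting $e\cdot\frac{gS}{g-1}$ transmitted IVAs by $QN$. The only (immaterial) difference is the exact cyclic assignment: the paper uses $\mathcal{Q}_k=\{\phi_{\langle ke+i\rangle_Q}\}_{i=0}^{e-1}$ while you use $\{\phi_{\langle k+i\rangle_Q}\}_{i=0}^{s'-1}$, and your verification that each function is replicated exactly $s$ times is in fact slightly more explicit than the paper's.
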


\begin{remark}
The number $Q=\frac{K}{\gcd{(K,s)}}$ of output functions in the above new scheme is only a factor of the number of nodes $K$, which is much smaller than the number $Q_{Li}={K \choose s}$ of output functions in Li-CDC scheme.
\end{remark}

\begin{remark}
Applying Theorem \ref{th-PDA-CDC} with $s=1$, the scheme is the same
as the scheme in  Lemma \ref{lem-PDA-CDC-same}.
That is, the schemes in Theorem \ref{th-PDA-CDC} include the schemes in \cite{YSW} as a special case.
\end{remark}

The rest of the section is devoted to the proof of Theorem \ref{th-PDA-CDC}.
Given a  $g$-$(K,N,Z,S)$ PDA $\mathbf{P}=(p_{n,k})$, $n$ $\in$ $\{0,$ $1,$ $\ldots,$ $N-1\}$, $k\in \{0,1,\ldots,K-1\}$,
we can construct a CDC scheme with $K$ nodes $\mathcal{K}=\{0,1,\ldots, K-1\}$,
$N$ files $\mathcal{W}= \{w_0, w_1, \ldots, w_{N-1}\}$ and $Q=\frac{K}{\gcd{(K,s)}}$ output functions
$\mathcal{Q} = \{\phi_0, \phi_1, \ldots, \phi_{Q-1}\}$.

\begin{itemize}
  \item Map phase. Node $k$, where $k \in \{0,1,\ldots,K-1\}$, stores the files in
\begin{equation}
\label{map-files}
\mathcal{W}_k=\{w_n \ | \ n \in \{0,1,\ldots,N-1\},p_{n,k}=*\},
\end{equation}
Hence the node $k$ can compute the IVAs in the set $\bigcup\limits_{p_{n,k}=*}v_{q_k,n}$.
We can also obtain the computation load
\begin{eqnarray*}
\begin{split}
r =\frac{\sum_{k=0}^{K-1}|\mathcal{W}_k|}{N}=\frac{\sum_{k=0}^{K-1}Z}{N} = \frac{KZ}{N}.
\end{split}
 \end{eqnarray*}
\item Shuffle phase. Node $k$, where $k \in \{0,1,\ldots,K-1\}$, is responsible for computing a subset of output functions
\begin{equation}
\label{shu-func}
\mathcal{Q}_k=\{\phi_{<ke>_Q}, \phi_{<ke+1>_Q}, \ldots, \phi_{<(k+1)e-1>_Q}\},
\end{equation}
where $e=\frac{sQ}{K}$ and
$<a>_b$ is the least non-negative residue of $a$ modulo $b$ for any positive integers $a$ and $b$.
We can directly check that $|\mathcal{Q}_k|=e$ and each output function is computed exactly $s$ times.

We divide the processes into $e$ steps such that in the $i$th step,
$1 \leq i \leq e$, node $k \in \mathcal{K}$ is responsible for computing $\phi_{<ke+i>_Q}$.
Then in such a step, by using Lemma \ref{lem-PDA-CDC},
the node $k$ can derive enough numbers of IVAs for computing $\phi_{<ke+i>_Q}$
and the number of IVAs multicasted by all the  nodes is $\frac{gS}{g-1}$.
There are $e$ steps, thus the total number of IVAs multicasted by all the  nodes is $\frac{gSe}{g-1}$.
So the communication load is
\begin{equation*}
  L=\frac{\frac{gSe}{g-1}}{QN}=\frac{\frac{gS\frac{sQ}{K}}{g-1}}{QN}=\frac{gsS}{(g-1)KN}.
\end{equation*}

\end{itemize}

\begin{example}
\label{ex-pda-CDC}
The following array is a $4$-$(10,5,3,5)$ PDA.
\begin{equation*}
\mathbf{P}_{5\times 10}=
\bordermatrix{
  & 0& 1& 2& 3 & 4 &5 & 6 & 7 & 8 & 9\cr
0& *&*&*&*&*&*&0&1 & 2 & 3 \cr
1 & *&*&*&0&1&2 & * &*&*&4\cr
2 & *&0&1&*&*&3 & * &*&4&*\cr
3& 0&*&2&*&3 &*&* & 4 &* &*\cr
4& 1 &2&*&3&*&* & 4 &*& *&* \cr
}
\end{equation*}

From Theorem \ref{th-PDA-CDC},
for $s=4$, we can construct a CDC scheme with
$K=10$ nodes $\mathcal{K}=\{0,1,\ldots,9\}$,
$N=5$ files $\mathcal{W}=\{w_0,w_1,w_2,w_3,w_4\}$,
$Q=\frac{K}{\gcd(K,s)}=5$ functions $\mathcal{Q}=\{\phi_0,\phi_1,\phi_2,\phi_3,\phi_4\}$.
Then, according to \eqref{map-files} and \eqref{shu-func},
\begin{eqnarray*}
\begin{split}
  \mathcal{W}_0=\{w_0,w_1,w_2\},  \mathcal{W}_1=\{w_0,w_1,w_3\}, \\
  \mathcal{W}_2=\{w_0,w_1,w_4\},  \mathcal{W}_3=\{w_0,w_2,w_3\},  \\
  \mathcal{W}_4=\{w_0,w_2,w_4\},  \mathcal{W}_5=\{w_0,w_3,w_4\},  \\
  \mathcal{W}_6=\{w_1,w_2,w_3\},  \mathcal{W}_7=\{w_1,w_2,w_4\}, \\
  \mathcal{W}_8=\{w_1,w_3,w_4\},  \mathcal{W}_9=\{w_2,w_3,w_4\},
\end{split}
\end{eqnarray*}
$e=\frac{sQ}{K}=2$, and
\begin{eqnarray*}
\begin{split}
\mathcal{Q}_0=\{\phi_0,\phi_1\},
\mathcal{Q}_1=\{\phi_2,\phi_3\}, \\
\mathcal{Q}_2=\{\phi_4,\phi_0\},
\mathcal{Q}_3=\{\phi_1,\phi_2\}, \\
\mathcal{Q}_4=\{\phi_3,\phi_4\},
\mathcal{Q}_5=\{\phi_0,\phi_1\}, \\
\mathcal{Q}_6=\{\phi_2,\phi_3\},
\mathcal{Q}_7=\{\phi_4,\phi_0\},\\
\mathcal{Q}_8=\{\phi_1,\phi_2\},
\mathcal{Q}_9=\{\phi_3,\phi_4\}.
\end{split}
\end{eqnarray*}
We divide the processes into $e=2$ steps.
\begin{itemize}
  \item In the first step, node $k$, where $k \in \{0,1,\ldots,9\}$, is responsible for computing $\phi_{<ke>_Q}
  =\phi_{<2k>_5}$.
We list them in Table \ref{ta-1st-ste}.
By using Lemma \ref{lem-PDA-CDC}, node $k$, $k$ $\in$ $\{0,$ $1,$ $\ldots,$ $9\}$
can derive enough IVAs for computing
$\phi_{<2k>_5}$.
 \begin{table}[h]
\caption{The first step of shuffle phase}\label{ta-1st-ste}
\resizebox{230pt}{15pt}{
  \begin{tabular}{|c|c|c|c|c|c|c|c|c|c|c|c}
\hline
    % after \\: \hline or \cline{col1-col2} \cline{col3-col4} ...
Node&  0& 1& 2& 3 & 4 &5 & 6 & 7 & 8 & 9 \\ \hline
output function & $\phi_0$   & $\phi_2$ &  $\phi_4$  & $\phi_1$ & $\phi_3$
     & $\phi_0$   & $\phi_2$ &  $\phi_4$  & $\phi_1$ & $\phi_3$\\ \hline
  \end{tabular}}\centering
\end{table}
\item In the second step, similar to the first step,
node $k$, $k \in \{0,1,\ldots,9\}$ can derive enough IVAs for computing $\phi_{<ke+1>_Q}=\phi_{<2k+1>_5}$.
We list them in  Table \ref{ta-2nd-ste}.
 \begin{table}[h]
\caption{The second step of shuffle phase}\label{ta-2nd-ste}
\resizebox{230pt}{15pt}{
  \begin{tabular}{|c|c|c|c|c|c|c|c|c|c|c|c}
\hline
    % after \\: \hline or \cline{col1-col2} \cline{col3-col4} ...
Node&  0& 1& 2& 3 & 4 &5 & 6 & 7 & 8 & 9 \\ \hline
output function & $\phi_1$   & $\phi_3$ &  $\phi_0$  & $\phi_2$ & $\phi_4$
     & $\phi_1$   & $\phi_3$ &  $\phi_0$  & $\phi_2$ & $\phi_4$\\ \hline
  \end{tabular}}\centering
\end{table}
\end{itemize}
So node $k$, $k\in \{0,1,\ldots, K-1\}$ can compute the output functions in $\mathcal{Q}_k$.
The computation load is
$r=\frac{10 \times 3}{5}=6$.
The total number of IVAs multicasted by all the  nodes is
$\frac{1}{3} \times 4 \times 10 = \frac{40}{3}$.
\end{example}

\section{Performance}
\label{Perf}

From Theorem \ref{th-PDA-CDC}, we can directly obtain CDC schemes from known PDAs.
We list some known results on $g$-$(K,N,Z,S)$ PDAs in Table \ref{table-known-results},
where $t | K$ represents that $t$ is a factor of $K$.
The interested readers can be reffered to
\cite{CJWY}, \cite{CJYT}, \cite{SZG}, \cite{YCTC} for more known results about PDAs .

{\begin{table*}
\center
\caption{Some known results on PDAs }\label{table-known-results}
%\resizebox{450pt}{45pt}{
\begin{tabular}{|c|c|c|c|c|c|c|c|}
\hline
    % after \\: \hline or \cline{col1-col2} \cline{col3-col4} ...
References and Parameters& $g$ & $K$ & $N$& $Z$ &$S$ \\
\hline
\tabincell{c} {\cite{MN}, $K,t \in \mathbb{N}^+$ \\ with $1 \leq t \leq K-1$ }
             & $t+1$  & $K$  & ${K \choose t}$& ${K-1 \choose t-1}$ & ${K \choose t+1}$\\ \hline
\tabincell{c} {\cite{YCTC}, $K, t \in \mathbb{N}^+$ \\ with $t \geq 2$ and $t | K$}
             &$t$ &$K$& $(\frac{K}{t})^{t-1}$
             & $(\frac{K}{t})^{t-2}$& $(\frac{K}{t})^{t}-(\frac{K}{t})^{t-1}$\\ \hline
\tabincell{c} {\cite{YCTC}, $K, t \in \mathbb{N}^+$ \\ with $t \geq 2$ and $t | K$}
             & $K-t$ &$K$ &$(\frac{K}{t}-1)(\frac{K}{t})^{t-1}$
             &$(\frac{K}{t}-1)^{2}(\frac{K}{t})^{t-2}$ &$(\frac{K}{t})^{t-1}$\\ \hline
\end{tabular} %}
\end{table*}}

\subsection{The first new scheme}

Let $\mathbf{P}$ be a $(t+1)$-$(K, {K \choose t}, {K-1 \choose t-1}, {K \choose t+1})$ PDA from \cite{MN}
(the PDA in the second row of Table \ref{table-known-results}).
From Theorem \ref{th-PDA-CDC}, for any positive integer $s\leq K$,
one can obtain a CDC scheme, say Scheme 1,
with $K$ nodes, $N={K \choose t}$ files and $Q=\frac{K}{\gcd{(K,s)}}$ output functions,
where $s$ is corresponding to the number of nodes that compute each output function.
Furthermore, the computation load is
  \begin{equation*}
    r=\frac{KZ}{N}=\frac{K{K-1 \choose t-1}}{{K \choose t}}=t,
  \end{equation*}
  and the communication load is
\begin{eqnarray*}
\label{new1-Com-load}
\begin{split}
L_1(r,s)
& =\frac{sgS}{(g-1)KN}=\frac{s(t+1){K \choose t+1}}{((t+1)-1)K{K \choose t}}\\
& =\frac{s}{t}(1-\frac{t}{K})=\frac{s}{r}(1-\frac{r}{K}).\\
\end{split}
\end{eqnarray*}

Note that if $s \geq \frac{rK}{K-r}$, we have $L_1(r,s)=\frac{s}{r}(1-\frac{r}{K})\geq 1$.
In this case, the nodes do not need to transmit coded messages.
Instead, each IVA can be multicasted by one node which can compute the IVA locally.
By using this method, the communication load $L_1(r,s) = 1$.
So the communication load is $L_1(r,s)=\min\{\frac{s}{r}(1-\frac{r}{K}),1\}$.

\subsubsection{Optimality}

When $s=1$ and $1 \le r < K-1$, \eqref{Ali-Com-load} can be written as
\begin{eqnarray*}
\begin{split}
L^{*}(r,1)
& = \sum\limits_{l=\max\{r+1,1\}}^{\min\{r+1,K\}}
  \frac{{K-r \choose K-l}{r \choose l-1}}{{K \choose 1}} \frac{l-r}{l-1}\\
& = \sum\limits_{l=\max\{r+1\}}^{\min\{r+1\}}
  \frac{{K-r \choose K-l}{r \choose l-1}}{{K \choose 1}} \frac{l-r}{l-1}\\
& =  \frac{1}{r}(1-\frac{r}{K}).\\
%  & =\sum\limits_{l=\max\{r+1,1\}}^{\min\{r+1,K\}}
%      \frac{l{K \choose l}{l-2 \choose r-1}{r \choose l-1}}{r{K \choose r}{K \choose 1}}\\
%  & =\sum\limits_{l=r+1}^{r+1}
%      \frac{l{K \choose l}{l-2 \choose r-1}{r \choose l-1}}{r{K \choose r}{K \choose 1}}\\
%  & =  \frac{1}{r}(1-\frac{r}{K})\\
\end{split}
\end{eqnarray*}
Obviously, the communication load  $L_1(r,1)=\frac{1}{r}(1-\frac{r}{K})$
of Scheme 1 achieves the optimal computation-communication trade-off.

\begin{remark}
In this case, one can directly check that our scheme is the same as the scheme with $s=1$ proposed in \cite{LMYA}.
\end{remark}

When $1 \leq s \leq K$ and $r=K-1$, from \eqref{Ali-Com-load}, we have
\begin{eqnarray*}
\small{
\begin{split}
L^{*}(K-1,s)
& =\sum\limits_{l=\max\{K-1+1,s\}}^{\min\{K-1+s,K\}}
\frac{{K-(K-1) \choose K-l}{K-1 \choose l-s}}{{K \choose s}} \frac{l-(K-1)}{l-1}\\
& =\sum\limits_{l=\max\{K\}}^{\min\{K\}}
\frac{{K-(K-1) \choose K-l}{K-1 \choose l-s}}{{K \choose s}} \frac{l-(K-1)}{l-1}\\
&=\frac{s}{K-1}(1-\frac{K-1}{K})\\
&=\frac{s}{r}(1-\frac{r}{K}).\\
%  &=\sum\limits_{l=\max\{K-1+1,s\}}^{\min\{K-1+s,K\}}
%  \frac{l{K \choose l}{l-2 \choose (K-1)-1}{K-1 \choose l-s}}{(K-1){K \choose K-1}{K \choose s}}\\
%  &=\sum\limits_{l=K}^{K}
%  \frac{l{K \choose l}{l-2 \choose K-2}{K-1 \choose l-s}}{(K-1){K \choose K-1}{K \choose s}}\\
%  &=\frac{K{K \choose K}{K-2 \choose K-2}{K-1 \choose K-s}}{(K-1){K \choose K-1}{K \choose s}}\\
%  &=\frac{s}{K-1}(1-\frac{K-1}{K})\\
%  &=\frac{s}{r}(1-\frac{r}{K})\\
\end{split}}
\end{eqnarray*}

So in this case, the communication load of Scheme 1 achieves the optimal computation-communication trade-off.

\begin{remark}
\label{rem-scheme1}
One can show that the number of output functions in Scheme 1 is
much smaller than that of output functions in Li-CDC scheme.
The number of output functions in Li-CDC scheme is $Q_{Li}={K \choose s}$,
while the number of output functions in Scheme 1 is $Q_1=\frac{K}{\gcd(K,s)}$.
For example, if $s=K/w$ where $w$ is a factor of K, then $Q_{Li}= {K \choose K/w}$ and
$Q_1=\frac{K}{\gcd(K,K/w)} =w$, respectively.
We list the cases $w=2$ and $w=3$ when $2 \leq K \leq 20$ in Table \ref{ta-scheme1-opti-w=2}
and in Table \ref{ta-scheme1-opti-w=3}, respectively.
\end{remark}

 \begin{table}[h]
 \caption{The numbers of output functions in Li-CDC scheme and Scheme 1
 when $K$ is even and $s=\frac{K}{2}$}
 \label{ta-scheme1-opti-w=2}
%\begin{spacing}{1.0}
%  \normalsize{
%\resizebox{450pt}{40pt}{
  \begin{tabular}{|c|c|c||c|c|c|c|c|c|}
\hline
    % after \\: \hline or \cline{col1-col2} \cline{col3-col4} ...
       \tabincell{c} { Number of \\ Node $K$  } &
       \tabincell{c} {  $Q_{Li}={K \choose s}$}  &
       \tabincell{c} { $Q_1=\frac{K}{\gcd(K,s)}$  } &
       \tabincell{c} { $K$  } &
       \tabincell{c} {  $Q_{Li}$}&
       \tabincell{c} { $Q_1$  }  \\ \hline
 $2$   & $2$ &  $2$ & $12$ & $924$ &  $2$ \\ \hline
 $4$   & $6$ &   $2$ & $14$ & $3432$ &  $2$ \\ \hline
 $6$ & $20$ &  $2$&   $16$ & $12870$ &  $2$\\ \hline
 $8$ & $70$ &  $2$&   $18$ & $48620$ & $2$ \\ \hline
 $10$ & $252$ &  $2$&  $20$ &$184756$ & $2$  \\ \hline
\end{tabular}
%}
\centering
%}
%\end{spacing}
\end{table}

\begin{table}[h]
 \caption{The numbers of reduce functions in Li-CDC scheme and Scheme 1
 when $3|K$ and $s=\frac{K}{3}$}
 \label{ta-scheme1-opti-w=3}
%\begin{spacing}{1.0}
%  \normalsize{
%\resizebox{450pt}{40pt}{
  \begin{tabular}{|c|c|c||c|c|c|c|c|c|}
\hline
    % after \\: \hline or \cline{col1-col2} \cline{col3-col4} ...
       \tabincell{c} { Number of \\ Node $K$  } &
      \tabincell{c} {  $Q_{Li}={K \choose s}$}  &
      \tabincell{c} { $Q_1=\frac{K}{\gcd(K,s)}$  } &
       \tabincell{c} { $K$  } &
      \tabincell{c} {  $Q_{Li}$} &
       \tabincell{c} { $Q_1$  } \\ \hline
 $3$   & $3$ &  $3$ & $12$ & $495$ &  $3$ \\ \hline
 $6$   & $15$ &   $3$  & $15$ & $3003$ &  $3$\\ \hline
 $9$ & $84$ &  $3$ & $18$ & $18564$ &  $3$\\ \hline
\end{tabular}
%}
\centering
%}
%\end{spacing}
\end{table}

\subsubsection{Comparison}

For the other values of $K$, $r$ and $s$, we conjecture that
$H_1(r,s) = \frac{L_1(r,s)}{L^{*}(r,s)}\leq 2$.
Unfortunately, the structure of the formula
$L^{*}(r,s) = \sum\limits_{l=\max\{r+1,s\}}^{\min\{r+s,K\}}
\frac{{K-r \choose K-l}{r \choose l-s}}{{K \choose s}} \frac{l-r}{l-1}$
is too complex to prove this conjecture.
However, we could find out some values of $K$, $r$ and $s$
satisfying that $H_1(r,s) = \frac{L_1(r,s)}{L^{*}(r,s)}\leq 2$.

\begin{theorem}
\label{th-scheme1-1000}
For any positive integers $K$, $r$ and $s$ with $r,s \leq K \leq 1000$,
$H_1(r,s) = \frac{L_1(r,s)}{L^{*}(r,s)}\leq 2$.
\end{theorem}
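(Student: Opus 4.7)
The plan is a finite computational verification. The parameter set $\{(K,r,s) : 1 \le r,s \le K \le 1000\}$ contains on the order of $\sum_{K=1}^{1000} K^2 \approx 3.3 \times 10^{8}$ triples, and for each triple the ratio $H_1(r,s) = L_1(r,s)/L^{*}(r,s)$ can be evaluated exactly in at most $K+1$ elementary operations, so an exhaustive sweep is tractable. The theorem is then the assertion that this sweep finds no violation, and its proof is the report of the computation.

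Before launching the sweep I would peel off the triples already handled analytically in the discussion preceding the theorem. When $s = 1$ or $r = K-1$, Scheme~1 attains $L^{*}(r,s)$, so $H_1(r,s) = 1 \le 2$ automatically. I would also isolate the saturated regime $s \ge rK/(K-r)$, in which $L_1(r,s)$ is capped at $1$; there the desired bound reduces to $L^{*}(r,s) \ge 1/2$, which can be verified term by term using that $\frac{l-r}{l-1}$ is nondecreasing in $l$ together with the fact that the hypergeometric weights $\binom{K-r}{K-l}\binom{r}{l-s}/\binom{K}{s}$ sum to $1$ by Vandermonde's identity.

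For the remaining triples I would compute $L^{*}(r,s)$ directly from its definition. To avoid ever forming the oversized binomial coefficients in isolation, I would generate successive summands incrementally, updating $T_{l+1}/T_{l}$ by small integer multiplications and divisions, and accumulate in exact rational arithmetic (cross-checked against high-precision floating point). After initialization each summand costs $O(1)$, so the per-triple cost is $O(K)$ and the full search costs $O(K_{\max}^{4})$ word operations, which is heavy but feasible with optimized and parallelized code. The final step for each triple is a single rational comparison $L_1(r,s) \le 2\,L^{*}(r,s)$.

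The main obstacle is not mathematical but engineering: the absence of a clean closed form for $L^{*}(r,s)$ is exactly why the authors restrict to $K \le 1000$ and verify rather than prove the general conjecture. I would guard against numerical or coding mistakes by running two independent implementations and scrutinizing the degenerate boundary regimes $r+s = K$, $s = r$, and $s = r+1$, where the summation range collapses and off-by-one errors are most likely. With that safety net the verification either terminates without a counterexample, proving the theorem, or exhibits a specific triple that would refute the conjecture.
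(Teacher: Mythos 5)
Your proposal is the same approach the paper takes: the paper's proof is simply a computer-assisted exhaustive check of all triples $(K,r,s)$ with $r,s\le K\le 1000$, reporting a sample of the computed ratios in a table. Your additional safeguards (exact rational arithmetic, incremental computation of the summands, peeling off the cases $s=1$, $r=K-1$, and the saturated regime) are sensible engineering refinements but do not change the substance of the argument.
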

\begin{proof}
With the aid of a computer, one can find out that $H_1(r,s) \leq 2$ holds for
all the positive integers $K$, $r$ and $s$ with $r,s \leq K \leq 1000$.
Here we only list some cases in Table \ref{ta-H1}.
For the other cases, we omit it and the interested readers may contact the author for a copy.
 \begin{table}[h]
 \caption{The ratio of $L_1(r,s)$ to $L^{*}(r,s)$ with $K=16$}\label{ta-H1}
\resizebox{230pt}{50pt}{
  \begin{tabular}{|c|c|c|c|c|c|c|c|}
\hline
    % after \\: \hline or \cline{col1-col2} \cline{col3-col4} ...
    \tabincell{c} { Computation  \\ Load $r$  } &
       \tabincell{c} { Replication  \\ Factor $s$  } &
      \tabincell{c} {  Communication \\ Load $L^{*}(r,s)$  } &
      \tabincell{c} { Communication \\ Load $L_1(r,s)$ } &
      \tabincell{c} { $H_1(r,s)= \frac{L_1(r,s)}{L^{*}(r,s)}$ } \\ \hline
    & $1$   & $0.2708$ &  $0.2708$  & $1.0000$ \\ \cline{2-5}
$3$ & $2$   & $0.4333$ &    $0.5417$& $1.2500$ \\ \cline{2-5}
    &  $3$ & $0.5388$ &  $0.8125$&$1.5086$\\ \hline
 & $3$   & $0.3293$ & $0.4125$   & $1.2528$ \\ \cline{2-5}
$5$ & $5$   &$0.4540$  &   $0.6875$ & $1.5143$ \\  \cline{2-5}
 & $7$ & $0.5406$ &  $0.9625$&$1.7804$ \\ \hline
 &  $5$   & $0.2555$ &   $0.3125$ & $1.2233$ \\ \cline{2-5}
$8$ &  $8$   & $0.3579$ &   $0.5000$ & $1.3971$ \\  \cline{2-5}
 &  $10$ & $0.4125$ &  $0.6250$&$1.5150$ \\\hline
\end{tabular}}
\end{table}
\end{proof}

\begin{remark}
For the parameters $K,r$ and $s$ satisfying the conditions in Theorem \ref{th-scheme1-1000},
the communication loads of Scheme 1 are slightly larger than
those of Li-CDC scheme.
However, similar to Remark \ref{rem-scheme1},
one can show that the number of output functions in Scheme 1 is
much smaller than that of output functions in Li-CDC scheme.
\end{remark}

We also prove there exist some values of $K$, $r$ and $s$ such that
$H_1(r,s) = \frac{L_1(r,s)}{L^{*}(r,s)}\leq 2$ by using theoretical analysis.

\begin{lemma}
\label{lem-sche-theo-analy}
For any positive integers $K$, $r$ and $s$ with $r \geq s$
and $K \geq \frac{3rs(7r-s+1)}{8(r-s+1)}$,
$H_1(r,s) = \frac{L_1(r,s)}{L^{*}(r,s)}\leq 2$.
\end{lemma}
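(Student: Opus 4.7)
My plan is to prove $H_1(r,s)\leq 2$ via an elementary lower bound on $L^{*}(r,s)$. Somewhat surprisingly, the argument I have in mind yields the conclusion from the single condition $r\geq s$, so the numerical threshold $K\geq \frac{3rs(7r-s+1)}{8(r-s+1)}$ will turn out to be more than enough.

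The first step is to neutralize the awkward weights $\frac{l-r}{l-1}$ appearing in $L^{*}(r,s)$. Because every index $l$ in the sum satisfies $l\leq \min\{r+s,K\}\leq r+s$, one has $l-1\leq r+s-1$, giving termwise
\begin{equation*}
\frac{l-r}{l-1} \;\geq\; \frac{l-r}{r+s-1}.
\end{equation*}
Pulling the constant $(r+s-1)^{-1}$ out of the sum then yields
\begin{equation*}
L^{*}(r,s)\;\geq\;\frac{1}{r+s-1}\sum_{l}\frac{\binom{K-r}{K-l}\binom{r}{l-s}}{\binom{K}{s}}(l-r).
\end{equation*}

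The second and key step is to recognize the remaining weighted sum as a hypergeometric expectation. Fixing any $r$-subset $\mathcal{A}\subseteq \mathcal{K}$ and drawing $\mathcal{B}\subseteq \mathcal{K}$ uniformly at random of size $s$, the coefficient $\binom{K-r}{K-l}\binom{r}{l-s}/\binom{K}{s}$ is exactly $\Pr[|\mathcal{A}\cup\mathcal{B}|=l]$, and $l-r=|\mathcal{B}\setminus\mathcal{A}|$ (the $l=r$ term contributes zero). Linearity of expectation, together with $\Pr[k\in\mathcal{B}]=s/K$ for each $k\in\mathcal{A}^{c}$, then gives
\begin{equation*}
\sum_{l}\frac{\binom{K-r}{K-l}\binom{r}{l-s}}{\binom{K}{s}}(l-r) \;=\; \mathbb{E}\bigl[|\mathcal{B}\setminus\mathcal{A}|\bigr] \;=\; \frac{s(K-r)}{K},
\end{equation*}
so that $L^{*}(r,s)\geq \frac{s(K-r)}{K(r+s-1)}$.

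The final step is a one-line computation. Since $L_1(r,s)=\frac{s(K-r)}{rK}$, the bound collapses to
\begin{equation*}
H_1(r,s) \;=\; \frac{L_1(r,s)}{L^{*}(r,s)} \;\leq\; \frac{r+s-1}{r} \;\leq\; \frac{2r-1}{r} \;<\; 2,
\end{equation*}
the penultimate inequality being exactly the hypothesis $r\geq s$. The only non-mechanical ingredient in the entire plan is the probabilistic re-reading of the sum in step two, which I expect to be the main (and only mild) obstacle; once that identification is in hand, everything else is algebra. I would also append a short remark observing that this argument shows the quantitative bound on $K$ in the hypothesis to be superfluous: the lemma holds for all $K\geq r+1$ whenever $r\geq s\geq 1$.
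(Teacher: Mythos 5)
Your proof is correct, and it takes a genuinely different route from the paper's. The paper lower-bounds $L^{*}(r,s)$ by discarding all but the single term $l=r+s$ of the sum, which yields $L^{*}(r,s)\geq \frac{s}{r+s-1}\cdot\frac{(K-r)(K-r-1)\cdots(K-r-s+1)}{K(K-1)\cdots(K-s+1)}$; it then needs an auxiliary lemma bounding the elementary symmetric functions in the expansion of these falling factorials, plus the hypothesis $K\geq\frac{3rs(7r-s+1)}{8(r-s+1)}$, to show that the resulting ratio of products is close enough to $1$ that the final bound stays below $2$. You instead keep every term, replace the weight $\frac{l-r}{l-1}$ by the uniform lower bound $\frac{l-r}{r+s-1}$ (valid since $r+1\leq l\leq r+s$, so $l-r\geq 1$ and $0<l-1\leq r+s-1$), and evaluate the remaining sum in closed form as the hypergeometric expectation $\mathbb{E}[|\mathcal{B}\setminus\mathcal{A}|]=\frac{s(K-r)}{K}$; this identification is correct because $\binom{K-r}{K-l}\binom{r}{l-s}/\binom{K}{s}=\Pr[|\mathcal{A}\cup\mathcal{B}|=l]$ and the excluded value $l=r$ contributes zero. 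The payoff is a strictly stronger and cleaner result: $H_1(r,s)\leq\frac{r+s-1}{r}\leq 2-\frac{1}{r}<2$ from $r\geq s$ alone, with no quantitative threshold on $K$ (and the cap $L_1=\min\{\frac{s}{r}(1-\frac{r}{K}),1\}$ only helps). Your bound is also consistent with the paper's Table VII, e.g.\ for $K=16$, $r=s=3$ it gives $5/3$ against the tabulated $1.5086$. Your closing remark that the hypothesis on $K$ is superfluous is therefore justified, and your argument would in fact simplify the paper's Appendix A considerably.
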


The proof of Lemma \ref{lem-sche-theo-analy} could be found in Appendix A.

By using Lemma \ref{lem-sche-theo-analy},
we can provide a method to show $H_1(r,s) \leq 2$ for some positive integers $s$, $r$ and $K$.
More specifically, given positive integers $r$ and $s$, one can find out an integer $K(r,s)$,
such that  $H_1(r,s) \leq 2$ if $K \geq K(r,s)$.
For $K < K(r,s)$, with the aid of a computer, one may check that whether $H_1(r,s) \leq 2$ holds.
We take the following result as an example.

\begin{theorem}
\label{th-sche-theo-analy}
Suppose that $K$, $r$ and $s$ are positive integers.
If $s\leq r \leq 8$ and $r\leq K$, then $H_1(r,s)=\frac{L_1(r,s)}{L^{*}(r,s)}\leq 2$.
\end{theorem}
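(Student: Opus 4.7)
The plan is to combine Lemma \ref{lem-sche-theo-analy} with a finite case check. Lemma \ref{lem-sche-theo-analy} already yields $H_1(r,s)\le 2$ for every $(r,s,K)$ with $s\le r$ and $K\ge K_0(r,s)$, where $K_0(r,s):=\bigl\lceil \tfrac{3rs(7r-s+1)}{8(r-s+1)} \bigr\rceil$. Therefore, once $r$ and $s$ are fixed with $s\le r\le 8$, the only remaining task is to verify $H_1(r,s)\le 2$ on the finite band $r\le K<K_0(r,s)$.

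First I would enumerate the $(r,s)$ pairs. Since $1\le s\le r\le 8$, there are only $\sum_{r=1}^{8} r = 36$ such pairs, and for each of them $K_0(r,s)$ is an explicit small integer. A direct calculation shows that $K_0(r,s)$ is maximized when $r=s=8$, giving $K_0(8,8)=\lceil 3\cdot 8\cdot 8\cdot 49/8\rceil = 1176$; every other pair yields a strictly smaller threshold. In particular, except for the pair $(8,8)$, all thresholds satisfy $K_0(r,s)\le 1000$, so Theorem \ref{th-scheme1-1000} already disposes of the residual bands.

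Second, for the exceptional pair $(r,s)=(8,8)$ (and any other pair whose threshold happens to exceed $1000$), I would verify the residual range $8\le K\le 1175$ by a direct numerical check: for each such $K$, evaluate
\[
L_1(8,8)=\min\Bigl\{\tfrac{8}{8}\bigl(1-\tfrac{8}{K}\bigr),\,1\Bigr\}=1-\tfrac{8}{K}
\]
and the Li bound $L^{*}(8,8)$ from \eqref{Ali-Com-load}, then confirm $L_1/L^{*}\le 2$. Because the sum in $L^{*}(r,s)$ has at most $\min\{r+s,K\}-\max\{r+1,s\}+1\le r+1$ terms, each evaluation is cheap, and the total amount of numerical work is bounded by roughly $36\cdot 1200$ evaluations, well within routine computation.

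The main obstacle is purely bookkeeping: one must be sure that $K_0(r,s)$ as produced by Lemma \ref{lem-sche-theo-analy} is small enough that the leftover band is finite and enumerable, and one must handle the boundary cases where $L_1(r,s)$ is truncated by the trivial bound $1$. Both are immediate once the pair list is tabulated, since $r\le 8$ keeps the product $rs(7r-s+1)$ bounded by a few thousand. No additional analytic input is required beyond Lemma \ref{lem-sche-theo-analy} and the explicit formula \eqref{Ali-Com-load}, so the argument reduces to an asymptotic bound for large $K$ plus a finite verification for small $K$.
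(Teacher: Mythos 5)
Your proposal is correct and follows essentially the same route as the paper: invoke Lemma \ref{lem-sche-theo-analy} for $K$ above the threshold $\frac{3rs(7r-s+1)}{8(r-s+1)}$, observe that $(r,s)=(8,8)$ is the only pair with threshold exceeding $1000$ (namely $1176$), and dispose of the remaining finite band via the computer verification of Theorem \ref{th-scheme1-1000} plus a direct check of the leftover range for $(8,8)$. The only cosmetic difference is that the paper splits the $(8,8)$ residual band into $K\le 1000$ (Theorem \ref{th-scheme1-1000}) and $1000<K<1176$ (extra computation), whereas you check it in one sweep.
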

\begin{proof}
We divide the proof into two parts.
\begin{itemize}
  \item [1)] $(r,s) \neq (8,8)$.
  We take $(r,s)=(2,2)$ as an example.
  According to Lemma \ref{lem-sche-theo-analy},
  for any positive integer $K \geq \frac{3rs(7r-s+1)}{8(r-s+1)}=19.5$, $H_1(r,s)\leq 2$ holds.
  For any positive integer $K < 19.5$, we can obtain $H_1(r,s)\leq 2$ from Theorem \ref{th-scheme1-1000}.
  Similarly, we can prove $H_1(r,s)\leq 2$ holds for any other pairs $(r,s)$.
  \item [2)] $(r,s) =(8,8)$.
  According to Lemma \ref{lem-sche-theo-analy},
  for any positive integer $K \geq \frac{3rs(7r-s+1)}{8(r-s+1)}=1176$, $H_1(r,s)\leq 2$ holds.
  For any positive integer $K \leq 1000$, we can obtain $H_1(r,s)\leq 2$ from Theorem \ref{th-scheme1-1000}.
  For any positive integer $1000< K < 1176$, with the aid of a computer,
  one can show that $H_1(r,s) \leq 2$ holds.
\end{itemize}
This completes the proof.
\end{proof}

\begin{remark}
It is easy to see that for all the parameters satisfying the conditions in Theorem \ref{th-sche-theo-analy}, the communication loads of Scheme 1 are slightly larger than those of Li-CDC scheme,
while the number of output functions in Scheme 1 is much
smaller than that of output functions in Li-CDC scheme.
Here we only list some cases in Table \ref{ta-scheme1-Q},
where $Q_{Li}$ and $Q_1$ are the numbers of output functions in Li-CDC scheme and Scheme 1, respectively.
\end{remark}

\begin{table}[h]
 \caption{The numbers of output functions in Li-CDC scheme and Scheme 1} \label{ta-scheme1-Q}
\resizebox{230pt}{40pt}{
  \begin{tabular}{|c|c|c|c|c|c|c|c|}
\hline
    % after \\: \hline or \cline{col1-col2} \cline{col3-col4} ...
    \tabincell{c} { Number of \\ Node $K$  } &
     \tabincell{c} { Computation \\ Load $r$  } &
       \tabincell{c} { Replication  \\ Factor $s$  } &
      \tabincell{c} {  $Q_{Li}={K \choose s}$  } &
      \tabincell{c} { $Q_1=\frac{K}{\gcd(K,s)}$ }  \\ \hline
% $12$   & $3$ &   $2$ & $66$ & $6$\\ \hline
% $12$   & $5$ &   $4$ & $495$ & $3$\\  \hline
% $12$ & $8$ &  $6$& $924$ & $2$\\ \hline
     & $3$ &   $2$ & $120$ & $8$\\ \cline{2-5}
 $16$   & $5$ &   $4$ & $1820$ & $4$\\  \cline{2-5}
  & $8$ &  $6$& $8008$ & $8$ \\ \hline
    & $3$ &   $2$ & $190$ & $10$\\ \cline{2-5}
 $20$   & $5$ &   $4$ & $4845$ & $5$\\  \cline{2-5}
  & $8$ &  $6$& $38760$ & $10$ \\ \hline
\end{tabular}}
\end{table}

\subsection{The second new scheme }
Let $\mathbf{P}$ be a $t$-$(K, (\frac{K}{t})^{t-1}, (\frac{K}{t})^{t-2}, (\frac{K}{t})^{t}-(\frac{K}{t})^{t-1})$
PDA from \cite{YCTC} (the PDA in the third row of Table \ref{table-known-results}).
From Theorem \ref{th-PDA-CDC},
for any positive integer $s\leq K$,
one can obtain a CDC scheme, say Scheme 2,
with $K$ nodes, $N=(\frac{K}{t})^{t-1}$ files
and $Q=\frac{K}{\gcd{(K,s)}}$ output functions,
where $s$ is corresponding to the number of nodes that compute each output function.
Furthermore, the computation load is
\begin{equation*}
 r=\frac{KZ}{N}=\frac{K(\frac{K}{t})^{t-2}}{(\frac{K}{t})^{t-1}}=t,
\end{equation*}
and the communication load is
\begin{eqnarray*}
\label{new1-Com-load}
\begin{split}
L_2(r,s)
& =\frac{sgS}{(g-1)KN}=\frac{st((\frac{K}{t})^{t}-(\frac{K}{t})^{t-1})}{(t-1)K(\frac{K}{t})^{t-1}}\\
& =\frac{s(K-t)}{(t-1)K}=\frac{s}{r-1}(1-\frac{r}{K}).\\
\end{split}
\end{eqnarray*}

Similar to the communication load of Scheme 1, it is possible that $L_2=\frac{s}{r-1}(1-\frac{r}{K}) > 1$.
By using similar method, we could have $L_2(r,s)=\min\{\frac{s}{r-1}(1-\frac{r}{K}),1\}$.
Obviously, the communication load in Scheme 2 is slightly larger than the communication load in Scheme 1,
i.e., $L_2=\frac{s}{r-1}(1-\frac{r}{K}) > \frac{s}{r}(1-\frac{r}{K})=L_1$.
Hence, similar to Scheme 1, we can also prove the following results.

\begin{theorem}
\label{th-scheme2}
Suppose that $K\geq 5$, $r$ and $s$ are positive integers satisfying that
$K \geq s$ and $r \geq 2 $ is  a factor of $K$.
Then $H_2(r,s) = \frac{L_2(r,s)}{L^{*}(r,s)} \leq 2.1$ holds
if one of the following conditions is satisfied:
\begin{itemize}
  \item [1)] $K\leq 1000;$
  \item [2)] $r \geq s+2$ and $K \geq  \frac{(111r-15s-111)rs}{44r-40s-44}$;
  \item [3)]  $s+2\leq r \leq 10$.
\end{itemize}
\end{theorem}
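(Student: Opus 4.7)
The plan is to mirror the proof of Theorem~\ref{th-sche-theo-analy} almost line for line, exploiting a simple algebraic link between the two schemes. The starting observation is that
\[
L_2(r,s)=\frac{s}{r-1}\Bigl(1-\frac{r}{K}\Bigr)
=\frac{r}{r-1}\cdot\frac{s}{r}\Bigl(1-\frac{r}{K}\Bigr)
=\frac{r}{r-1}\,L_1(r,s),
\]
so $H_2(r,s)=\frac{r}{r-1}H_1(r,s)$. This reduces Theorem~\ref{th-scheme2} to showing that $H_1(r,s)\leq \frac{2.1(r-1)}{r}$ under the hypotheses, a bound only slightly weaker than $2$ once $r\geq 2$, and it means that almost all of the machinery from Scheme~1 can be reused.

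For part~1) I would simply rerun the computer verification used for Theorem~\ref{th-scheme1-1000}: evaluate $L^{*}(r,s)$ from \eqref{Ali-Com-load} and $L_2(r,s)$ from the closed form above for every admissible triple $(K,r,s)$ with $K\leq 1000$, $r\mid K$, $r\geq 2$, $s\leq K$, and check $H_2\leq 2.1$ directly. Part~2) is the theoretical heart of the proof. I would prove an analogue of Lemma~\ref{lem-sche-theo-analy}, stated in an appendix: \emph{if $r\geq s+2$ and $K\geq\frac{(111r-15s-111)rs}{44r-40s-44}$, then $H_2(r,s)\leq 2.1$.} The strategy is to lower-bound $L^{*}(r,s)$ by keeping only a single favourable term (or two consecutive terms) of the sum in \eqref{Ali-Com-load}, say the $l=r+1$ contribution $\frac{(K-r)\binom{r}{r+1-s}}{\binom{K}{s}}\cdot\frac{1}{r}$ together with the $l=r+s$ endpoint, factor out $\binom{K}{s}^{-1}$ and expand the binomials as polynomial fractions in $K$. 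One then forms the ratio $H_2=L_2/L^{*}$, clears denominators, and reduces the inequality $H_2\leq 2.1$ to a polynomial inequality in $K$ whose dominant coefficients yield exactly the threshold $\frac{(111r-15s-111)rs}{44r-40s-44}$; the hypothesis $r\geq s+2$ is what keeps the denominator $44r-40s-44$ strictly positive.

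Part~3) is then a finite case analysis identical in spirit to Theorem~\ref{th-sche-theo-analy}. For each pair $(r,s)$ with $s+2\leq r\leq 10$ (and $r\geq 2$ so the scheme exists), I would evaluate the explicit threshold $K^\star(r,s)=\lceil\frac{(111r-15s-111)rs}{44r-40s-44}\rceil$. Since $r,s\leq 10$, one checks that $K^\star(r,s)\leq 1000$ for every such pair, so part~1) covers $K\leq 1000$, part~2) (the appendix lemma) covers $K\geq K^\star(r,s)$, and together they cover all admissible $K$. The cases $r=s$ and $r=s+1$, excluded from the current statement, are precisely the ones where the algebraic denominator $44r-40s-44$ becomes non-positive, which is why the hypothesis $r\geq s+2$ appears.

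The main obstacle I anticipate is the appendix lemma in part~2): the bound $2.1$ is not a clean power of $2$, and the threshold formula has very specific coefficients $111,15,44,40$, which suggests that the lower bound on $L^{*}(r,s)$ must be taken sharp enough to capture more than one term of the sum, otherwise the constants will not come out right. Choosing which terms of \eqref{Ali-Com-load} to retain, and handling the combinatorial identities for $\binom{K-r}{K-l}\binom{r}{l-s}/\binom{K}{s}$ carefully enough to preserve an inequality of the form $\mathrm{poly}_1(K,r,s)\geq 0$ with the stated coefficients, is the delicate step; once that polynomial inequality is obtained, the rest is straightforward algebra on the leading terms in $K$.
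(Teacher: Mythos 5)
Your proposal matches the paper's proof in all essentials: part 1) is a direct computer verification, part 2) reruns the Appendix~A argument with the extra factor $\frac{r}{r-1}$ (exactly your observation that $H_2=\frac{r}{r-1}H_1$), arriving at $H_2(r,s)\leq\frac{r+s-1}{r-1}\bigl(1+\frac{9rs}{8K-12rs}\bigr)$ and hence the stated threshold $\frac{(111r-15s-111)rs}{44r-40s-44}$, and part 3) is the same finite case analysis combining 1) and 2) pair by pair. The one small discrepancy is that the paper's lower bound on $L^{*}(r,s)$ keeps only the $l=r+s$ endpoint term of \eqref{Ali-Com-load} --- the $l=r+1$ term you mention first decays like $K^{1-s}$ and would not suffice on its own --- but since you also retain the endpoint, your lower bound is still valid and the argument goes through.
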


The proof of Theorem \ref{th-scheme2} is included in Appendix B.

Furthermore,  we can show that the number of files in Scheme 2 is
much smaller than that of files in Li-CDC scheme.
To do this, we need the following lemma.

\begin{lemma}(\cite{YCTC})
\label{le-known-equ}
For fixed rational number $a\in (0,1)$,
let $K \in \mathbb{N}^{+}$ such that $aK \in \mathbb{N}^{+}$,
when $K \rightarrow  \infty$,
\begin{eqnarray*}
{K \choose aK} \sim \frac{e^{K(a\ln\frac{1}{a}+(1-a)\ln\frac{1}{1-a})}}{\sqrt{2\pi K(a-a^2)}}.
\end{eqnarray*}
\end{lemma}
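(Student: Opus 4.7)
The plan is to apply Stirling's approximation $n!\sim\sqrt{2\pi n}(n/e)^n$ to each of the three factorials in
$$\binom{K}{aK}=\frac{K!}{(aK)!\,((1-a)K)!}.$$
Since $a\in(0,1)$ is a fixed rational and $aK\in\mathbb{N}^+$, both $aK$ and $(1-a)K$ are positive integers that tend to infinity with $K$, so Stirling applies to all three factorials simultaneously, and the asymptotic equivalence is preserved under a product of a fixed number (three) of such approximations.

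After substituting, I would organise the resulting expression into three independent pieces: the exponential factors involving $e$, the polynomial factors involving $K^K$ and powers of $a$ and $1-a$, and the square-root prefactors. The $e$-piece is immediate: the numerator contributes $e^{-K}$ and the denominator contributes $e^{-aK}\cdot e^{-(1-a)K}=e^{-K}$, so this piece cancels exactly.

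For the polynomial piece, $K^K$ in the numerator meets $(aK)^{aK}((1-a)K)^{(1-a)K}=K^{K}a^{aK}(1-a)^{(1-a)K}$ in the denominator, leaving $a^{-aK}(1-a)^{-(1-a)K}$. Taking logarithms rewrites this as $\exp\!\bigl(K[a\ln(1/a)+(1-a)\ln(1/(1-a))]\bigr)$, which matches the exponential in the target formula. For the square-root piece, $\sqrt{2\pi K}$ in the numerator is divided by $\sqrt{2\pi aK}\cdot\sqrt{2\pi(1-a)K}=2\pi K\sqrt{a(1-a)}$, producing $1/\sqrt{2\pi K(a-a^2)}$, which is exactly the denominator of the target.

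I do not anticipate any real obstacle: the argument is routine bookkeeping with Stirling's formula. The only point that warrants a brief justification is the preservation of asymptotic equivalence under the product/quotient of finitely many $\sim$ relations, which holds because each factor in the denominator is bounded away from zero in the asymptotic sense. With that noted, the three simplifications combine to yield the claimed equivalence.
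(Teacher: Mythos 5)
Your proof is correct: the three-fold application of Stirling's formula and the subsequent cancellations of the $e^{-K}$ and $K^K$ factors are carried out accurately and yield exactly the claimed asymptotic. The paper itself gives no proof of this lemma---it is simply quoted from the cited reference---so there is nothing to contrast your argument with; the routine Stirling computation you present is the standard and expected derivation.
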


From Lemma \ref{Li-scheme},
the number of files in Li-CDC scheme is $K \choose aK$,
where $aK=r$. So according to Lemma \ref{le-known-equ}, the number of files in Li-CDC scheme is
\begin{eqnarray*}
{K \choose aK} \sim \frac{e^{K(a\ln\frac{1}{a}+(1-a)\ln\frac{1}{1-a})}}{\sqrt{2\pi K(a-a^2)}}
\end{eqnarray*}
when $K \rightarrow  \infty$. On the other hand, the number of files in Scheme 2 is $(\frac{K}{r})^{r-1}$,
which is exponentially smaller in $K$ than the number of files in  Li-CDC scheme.

\begin{remark}
From Theorem \ref{th-scheme2}, the communication load of Scheme 2 is slightly larger than
that of Li-CDC scheme. However, the number $\frac{K}{\gcd(K,s)}$ of output functions in Scheme 2
is much smaller than the number ${K \choose s}$ of output functions in Li-CDC scheme.
Furthermore, according to the above discussions,
the number of files in Scheme 2 is exponentially smaller in $K$
than the number of files in  Li-CDC scheme.
%When $s=1$, Scheme 2 is exactly the first scheme in \cite{KR1},
%i.e., Scheme 2 includes the first scheme in \cite{KR1} as a special case.
\end{remark}

\subsection{The third new scheme }
Let $\mathbf{P}$  be a $(K-t)$-$(K,$ $(\frac{K}{t}-1)(\frac{K}{t})^{t-1},$ $
(\frac{K}{t}-1)^{2}(\frac{K}{t})^{t-2},$ $(\frac{K}{t})^{t-1})$ PDA from \cite{YCTC}
(the PDA in the forth row of Table \ref{table-known-results}).
From Theorem \ref{th-PDA-CDC},
for any positive integer $s\leq K$,
one can obtain a CDC scheme, say Scheme 3,
with $K$ nodes, $N=(\frac{K}{t}-1)(\frac{K}{t})^{t-1}$ files
and $Q=\frac{K}{\gcd{(K,s)}}$ output functions,
where $s$ is corresponding to the number of nodes that compute each reduce function.
Furthermore, the computation load is
  \begin{equation*}
    r=\frac{KZ}{N}=\frac{K(\frac{K}{t}-1)^{2}(\frac{K}{t})^{t-2}}{(\frac{K}{t}-1)(\frac{K}{t})^{t-1}}=K-t,
  \end{equation*}
  and the communication load is
 \begin{eqnarray*}
\label{new1-Com-load}
\begin{split}
L_3(r,s)
& =\frac{sgS}{(g-1)KN}=\frac{s(K-t)(\frac{K}{t})^{t-1}}{(K-t-1)K(\frac{K}{t}-1)(\frac{K}{t})^{t-1}}\\
& =\frac{st}{(K-t-1)K}=\frac{s(K-r)}{(r-1)K} =\frac{s}{r-1}(1-\frac{r}{K}).\\
\end{split}
\end{eqnarray*}

We note that the communication load of Scheme 3 is equal to that of Scheme 2, i.e.,
$L_3=\frac{s}{r-1}(1-\frac{r}{K})=L_2$, which implies that
$H_3(r,s) = \frac{L_3(r,s)}{L^{*}(r,s)} = \frac{L_2(r,s)}{L^{*}(r,s)}=H_2(r,s)$.
So we can prove the following results by using the same method as the proof of Theorem \ref{th-scheme2}.

\begin{theorem}
\label{th-scheme3}
Suppose that $K\geq 5$, $r$ and $s$ are positive integers satisfying that
$s\leq K$ , $r\leq K-2$ and $K-r$ is  a factor of $K$.
Then $H_3(r,s) = \frac{L_3(r,s)}{L^{*}(r,s)} \leq 2.1$ holds
if one of the following conditions is satisfied:
\begin{itemize}
  \item [1)] $K\leq 1000;$
  \item [2)] $r \geq s+2$ and  $K \geq  \frac{(111r-15s-111)rs}{44r-40s-44}$.
%  \item [3)]  $s+2\leq r \leq 8$.
\end{itemize}
\end{theorem}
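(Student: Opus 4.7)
The cornerstone of the plan is the identity
\[
L_{3}(r,s)=L_{2}(r,s)=\frac{s}{r-1}\Bigl(1-\frac{r}{K}\Bigr),
\]
already recorded in the sentence just above the theorem. Since both sides of the ratio depend only on $(K,r,s)$, we have $H_{3}(r,s)=H_{2}(r,s)$ identically in the real parameters $(K,r,s)$, and any upper bound on $H_{2}$ established by arguments that refer only to $K,r,s$ (and not, for instance, to divisibility of $r$ by $K$) transfers verbatim to $H_{3}$. The plan is therefore to reuse the two ingredients already developed for Theorem~\ref{th-scheme2}: the exhaustive computer enumeration for $K\le 1000$, and the analytical threshold from Appendix~B.

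For part (1), I would rerun the enumeration used in Theorem~\ref{th-scheme2}(1) with the divisibility filter changed from ``$r\ge 2$ and $r\mid K$'' to ``$r\le K-2$ and $(K-r)\mid K$'', and confirm $H_{3}(r,s)\le 2.1$ for every admissible triple with $K\le 1000$ and $s\le K$. The formulas being evaluated are exactly the same: $L^{*}(r,s)$ from \eqref{Ali-Com-load} and $L_{3}(r,s)=\tfrac{s}{r-1}(1-r/K)$. No new analytic content is needed, only a different filter on the search.

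For part (2), I would appeal directly to Appendix~B. That appendix proves $\tfrac{s}{r-1}(1-r/K)\le 2.1\,L^{*}(r,s)$ purely as a statement about the three parameters $K,r,s$, with no divisibility hypothesis invoked anywhere in the chain of estimates; so under the hypotheses $r\ge s+2$ and $K\ge \tfrac{(111r-15s-111)rs}{44r-40s-44}$ the same inequality yields $H_{3}(r,s)\le 2.1$. A small sanity check is in order: the denominator $44r-40s-44$ must be positive, which follows from $r\ge s+2$ (giving $44r-40s-44\ge 4s+4>0$), so the threshold is well-defined.

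The hardest point, such as it is, is entirely organisational rather than analytical: one must verify that Appendix~B really avoids the hypothesis ``$r\mid K$'' at every step, so that its estimate can be repurposed. Once this bookkeeping is done, the rest is a near-verbatim copy of the proof of Theorem~\ref{th-scheme2}, with the different filter in the computer search and with the small-$r$ window $s+2\le r\le 10$ simply dropped — in Scheme~3 that regime is intrinsically absent because $r=K-t$ with $t\ge 2$ forces $r$ to grow with $K$.
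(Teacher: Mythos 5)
Your proposal is correct and follows exactly the paper's route: the paper likewise observes that $L_3(r,s)=L_2(r,s)=\frac{s}{r-1}(1-\frac{r}{K})$, hence $H_3=H_2$, and then invokes the proof of Theorem~\ref{th-scheme2} parts 1) and 2) verbatim (computer check for $K\le 1000$ and the Appendix~B estimate, which indeed uses no divisibility hypothesis). Your added sanity checks on the denominator's positivity and on why the small-$r$ window disappears for Scheme~3 are sensible but do not change the argument.
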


Since the proof of Theorem \ref{th-scheme3} is the same as the proof of Theorem \ref{th-scheme2}-1) and 2),
we omit it here. The difference between Theorem \ref{th-scheme2} and Theorem  \ref{th-scheme3} is the
range of the computation load  $r$. In Theorem \ref{th-scheme2}, $r$ is a factor of $K$,
while $K-r$ is a factor of $K$ in Theorem \ref{th-scheme3}.

\begin{remark}
Similar to the discussions of Scheme 2, the communication load of Scheme 3 is slightly larger than
that of Li-CDC scheme. However, the number of output functions in Scheme 3
is much smaller than the number of output functions in Li-CDC scheme,
and  the number of files in Scheme 3 is exponentially smaller in $K$
than the number of files in  Li-CDC scheme.
%When $s=1$, Scheme 2 is exactly the first scheme in \cite{KR1},
%i.e., Scheme 2 includes the first scheme in \cite{KR1} as a special case.
\end{remark}

\section{Conclusion}
\label{conclusion}
In this paper, we paid our attention to cascaded CDC schemes on homogeneous computing networks.
We showed that, in many cases, $\frac{L}{L_{Li}} \leq 2.1$,
where $L$ and $L_{Li}$ are the communication loads of our new scheme and the scheme
derived by Li {\it et al}., respectively.
Most importantly, the number of output functions in all the new schemes are only
a factor of the number of computing nodes,
and the number of input files in our new schemes
is much smaller  than that of input files in CDC schemes derived by Li {\it et al}.

% if have a single appendix:
%\appendix[Proof of the Zonklar Equations]
% or
%\appendix  % for no appendix heading
% do not use \section anymore after \appendix, only \section*
% is possibly needed

% use appendices with more than one appendix
% then use \section to start each appendix
% you must declare a \section before using any
% \subsection or using \label (\appendices by itself
% starts a section numbered zero.)
%

%\section*{Acknowledgment}
%
%The authors are very grateful to the reviewers and the Associate Editor, Prof. Lawrence Ong, for their valuable comments
%that improved the quality and presentation of this paper.

% Can use something like this to put references on a page
% by themselves when using endfloat and the captionsoff option.
\ifCLASSOPTIONcaptionsoff
  \newpage
\fi

% trigger a \newpage just before the given reference
% number - used to balance the columns on the last page
% adjust value as needed - may need to be readjusted if
% the document is modified later
%\IEEEtriggeratref{8}
% The "triggered" command can be changed if desired:
%\IEEEtriggercmd{\enlargethispage{-5in}}

% references section

% can use a bibliography generated by BibTeX as a .bbl file
% BibTeX documentation can be easily obtained at:
% http://mirror.ctan.org/biblio/bibtex/contrib/doc/
% The IEEEtran BibTeX style support page is at:
% http://www.michaelshell.org/tex/ieeetran/bibtex/
%\bibliographystyle{IEEEtran}
% argument is your BibTeX string definitions and bibliography database(s)
%\bibliography{IEEEabrv,../bib/paper}
%
% <OR> manually copy in the resultant .bbl file
% set second argument of \begin to the number of references
% (used to reserve space for the reference number labels box)
\bibliographystyle{IEEEtran}

\section*{Appendix A: Proof of Lemma 4}

Recall that the positive integers $K$, $r$ and $s$ satisfy $r\geq s$ and $K \geq \frac{3rs(7r-s+1)}{8(r-s+1)}$.

Firstly, consider the communication $L^{*}(r,s)$ of Li-CDC scheme.
Since
\begin{eqnarray}
\label{eq-value-K}
\begin{split}
K
& \geq  \frac{3rs(7r-s+1)}{8(r-s+1)}= \frac{3rs}{8}(7+\frac{6s-6}{r-s+1})\\
& \geq \frac{21}{8}rs >r+s,\\
\end{split}
\end{eqnarray}
we have
\begin{small}
\begin{equation*}
\begin{split}\label{simp-opti-tradeoff}
L^{*}(r,s)
&= \sum\limits_{l=\max\{r+1,s\}}^{\min\{r+s,K\}}
  \frac{{K-r \choose l-r}{r \choose l-s}}{{K \choose s}} \frac{l-r}{l-1}\\
&= \frac{1}{{K \choose s}}\sum\limits_{l=\max\{r+1,s\}}^{r+s}
  {K-r \choose l-r}{r \choose l-s} \frac{l-r}{l-1}\\
&\geq \frac{1}{{K \choose s}}{K-r \choose s}{r \choose r} \frac{s}{r+s-1}\\
%&= \frac{s!}{K(K-1)\cdots(K-s+1)} \\
%& \cdot \frac{(K-r)(K-r-1)\cdots(K-r-s+1)}{s!}\cdot \frac{s}{r+s-1}\\
&= \frac{s(K-r)(K-r-1)\cdots(K-r-s+1)}{(r+s-1)K(K-1)\cdots(K-s+1)}.
\end{split}
\end{equation*}
\end{small}
Then
\begin{small}
\begin{eqnarray}
\label{eq-H-1}
\begin{split}
H_1(r,s)
& = \frac{L_1(r,s)}{L^{*}(r,s)} \\
& \leq  \frac{s(K-r)}{Kr}\frac{(r+s-1)K(K-1)\cdots(K-s+1)}{s(K-r)(K-r-1)\cdots(K-r-s+1)}\\
& = \frac{r+s-1}{r}\frac{(K-1)\cdots(K-(s-1))}{(K-(r+1))\cdots(K-(r+s-1))}.\\
\end{split}
\end{eqnarray}
\end{small}

In order to evaluate the above value of $H_1(r,s)$, the following lemma is needed.

\begin{lemma}
\label{lem-evalu}
Suppose that K, $a_1$, $a_2$, \ldots, $a_n$ are  positive integers with $K>a_i$, $1 \leq i \leq n$,
and $(K-a_1)(K-a_2)\cdots(K-a_n)=K^{n}-b_1K^{n-1}+b_2K^{n-2} + \ldots + (-1)^{n-1}b_{n-1}K + (-1)^{n}b_{n}$. For any $1 \leq h \leq n-1$,
\begin{itemize}
  \item [1)] $b_{h+1}\leq \frac{\sum_{1 \leq i \leq n}a_i}{h+1}b_h$;
  \item [2)] $b_{h+1}K^{n-h-1} \leq b_hK^{n-h}$ holds if $K\geq \frac{\sum_{1 \leq i \leq n}a_i}{h+1}$.
\end{itemize}

\end{lemma}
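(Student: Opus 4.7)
The plan is to first identify $b_h$ as the $h$-th elementary symmetric polynomial in the $a_i$'s, that is $b_h=e_h(a_1,\ldots,a_n)=\sum_{1\le i_1<\cdots<i_h\le n}a_{i_1}\cdots a_{i_h}$, which is just the Vieta expansion of the product $(K-a_1)\cdots(K-a_n)$. Once this identification is made, part 1) becomes the classical inequality $(h+1)\,e_{h+1}\le e_1\cdot e_h$ for nonnegative reals, and part 2) is then an immediate substitution.

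For part 1), I would use a double counting argument. Starting from
\[
(h+1)\,b_{h+1} \;=\; \sum_{\substack{S\subseteq\{1,\ldots,n\}\\ |S|=h+1}}\;\sum_{j\in S}\prod_{i\in S}a_i,
\]
the bijection that sends a pair $(S,j)$ with $j\in S$ and $|S|=h+1$ to the pair $(T,j)$ with $T=S\setminus\{j\}$, $|T|=h$, $j\notin T$ rewrites the sum as
\[
\sum_{\substack{T\subseteq\{1,\ldots,n\}\\ |T|=h}}\Bigl(\prod_{i\in T}a_i\Bigr)\Bigl(\sum_{j\notin T}a_j\Bigr).
\]
Since each $a_j\ge 0$, extending the inner sum from $j\notin T$ to all $j\in\{1,\ldots,n\}$ only increases it, so this quantity is at most $\bigl(\sum_{j=1}^n a_j\bigr)\cdot b_h=e_1\cdot b_h$. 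Dividing by $h+1$ gives $b_{h+1}\le \frac{\sum_{i=1}^n a_i}{h+1}\,b_h$, which is exactly part 1).

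Part 2) then follows cleanly: if $K\ge \frac{\sum_{i=1}^n a_i}{h+1}$, then combining with part 1) yields $b_{h+1}\le K\cdot b_h$, and multiplying both sides by $K^{n-h-1}\ge 0$ gives $b_{h+1}K^{n-h-1}\le b_h K^{n-h}$.

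There is no real obstacle here; the only subtlety is the bookkeeping of the bijection between $(h+1)$-subsets with a marked element and $h$-subsets together with an outside element, and the fact that nonnegativity of the $a_i$'s is what allows us to enlarge $\sum_{j\notin T}a_j$ to the full sum $e_1$. Everything else is routine algebra, so the proof should be short and self-contained.
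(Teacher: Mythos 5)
Your proposal is correct and follows essentially the same route as the paper: both identify $b_h$ as the elementary symmetric polynomial $e_h(a_1,\ldots,a_n)$, prove part 1) via the identity $(h+1)b_{h+1}=\sum_{|T|=h}\bigl(\prod_{i\in T}a_i\bigr)\bigl(\sum_{j\notin T}a_j\bigr)$ and the bound $\sum_{j\notin T}a_j\le\sum_{j=1}^{n}a_j$, and then obtain part 2) by substituting the hypothesis on $K$ and multiplying by $K^{n-h-1}$. No discrepancies to report.
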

\begin{proof}
Firstly, we will prove the first result.
It is not difficult to know that $b_h=\sum_{1 \leq i_1<i_2<\cdots < i_h \leq n} a_{i_1}a_{i_2}\cdots a_{i_h}$
and $b_{h+1}=\sum_{1 \leq i_1<i_2<\cdots < i_{h+1}\leq n} a_{i_1}a_{i_2}\cdots a_{i_{h+1}}$.
Then
\begin{tiny}
\begin{eqnarray*}
\begin{split}
b_{h+1}
& =\sum_{1 \leq i_1<i_2<\cdots < i_{h+1}\leq n} a_{i_1}a_{i_2}\cdots a_{i_{h+1}}\\
& =\frac{1}{h+1}\sum_{1\leq i_1<i_2<\cdots<i_{h+1}\leq n} (h+1)a_{i_1}a_{i_2}\cdots a_{i_{h+1}} \\
& =\frac{1}{h+1} \sum_{1 \leq i_1<i_2<\cdots < i_{h}\leq n}a_{i_1}a_{i_2}\cdots a_{i_h}
    \sum_{i_j\in \{1,\dots,n\}\setminus\{i_1,i_2,\ldots,i_h\}}a_{i_j}\\
& =\frac{1}{h+1}\sum_{1 \leq i_1<i_2<\cdots < i_{h}\leq n}a_{i_1}a_{i_2}\cdots a_{i_h}
    ( \sum_{i_j\in \{1,\dots,n\}}a_{i_j}\\
& \ \ \ \ \ \ -\sum_{i_j\in\{i_1,i_2,\ldots,i_h\}}a_{i_j})\\
& \leq\frac{\sum_{1 \leq i \leq n}a_i}{h+1}b_h. \\
\end{split}
\end{eqnarray*}
\end{tiny}
That is $b_{h+1}\leq \frac{\sum_{1 \leq i \leq n}a_i}{h+1}b_h$.

If $K\geq \frac{\sum_{1 \leq i \leq n}a_i}{h+1}$, together with the above result, we have
\begin{eqnarray*}
\begin{split}
b_{h+1}K^{n-h-1}
& \leq  \frac{\sum_{1 \leq i \leq n}a_i}{h+1}b_h K^{n-h-1}\\
& = (b_hK^{n-h})\frac{\sum_{1 \leq i \leq n}a_{i}}{K(h+1)} \\
& \leq b_hK^{n-h}.
\end{split}
\end{eqnarray*}
This completes the proof.
\end{proof}

By using Lemma \ref{lem-evalu},
\begin{small}
\begin{eqnarray}
\label{eq-H-1-nume}
\begin{split}
 & (K-1)\cdots(K-(s-1)) & \\
 = &K^{s-1}-\sum_{1 \leq i_1 \leq s-1} i_1K^{s-2}+\sum_{1 \leq i_1<i_2 \leq s-1}i_{1}i_{2}K^{s-3} \\
 &+ \ldots + (-1)^{s-1}\sum_{1 \leq i_1<i_2< \ldots <i_{s-1} \leq s-1}i_{1}i_{2}\ldots i_{s-1}\\
 \leq & K^{s-1}-\sum_{1 \leq i_1 \leq s-1} i_1K^{s-2}+\sum_{1 \leq i_1<i_2 \leq s-1}i_{1}i_{2}K^{s-3} \\
 \leq  &K^{s-1}-\sum_{1 \leq i \leq s-1} iK^{s-2}
   +\frac{\sum_{1 \leq i \leq s-1}i}{2}\sum_{1 \leq j \leq s-1}jK^{s-3}\\
 =  &K^{s-1}-\frac{s(s-1)}{2}K^{s-2}+\frac{s^2(s-1)^2}{8}K^{s-3},\\
\end{split}
\end{eqnarray}
\end{small}where the second and the third  from last formula come from Lemma \ref{lem-evalu}-1)
and Lemma \ref{lem-evalu}-2), respectively.
Similarly, by using Lemma \ref{lem-evalu}-2),
\begin{eqnarray}
\label{eq-H-1-deno}
\begin{split}
& (K-(r+1))\cdots(K-(r+s-1)) \\
\geq &  K^{s-1}-\sum_{r+1 \leq i_1 \leq r+s-1} i_1K^{s-2} \\
= & K^{s-1}-\frac{(2r+s)(s-1)}{2}K^{s-2}.\\
\end{split}
\end{eqnarray}
So, according to \eqref{eq-H-1}, \eqref{eq-H-1-nume}, \eqref{eq-H-1-deno},
\begin{eqnarray}
\begin{small}
\label{eq-H-1-main}
\begin{split}
H_1(r,s)
& \leq \frac{r+s-1}{r}\frac{(K-1)\cdots(K-(s-1))}{(K-(r+1))\cdots(K-(r+s-1))}\\
& \leq \frac{r+s-1}{r} \frac{K^{s-1}-\frac{s(s-1)}{2}K^{s-2}+\frac{s^2(s-1)^2}{8}K^{s-3}}
  {K^{s-1}-\frac{(2r+s)(s-1)}{2} K^{s-2}}\\
& = \frac{r+s-1}{r} \frac{8K^{2}-4s(s-1)K+s^2(s-1)^2}{8K^{2}-4(2r+s)(s-1)K}\\
& = \frac{r+s-1}{r} (1+\frac{8r(s-1)K+s^2(s-1)^2}{8K^{2}-4(2r+s)(s-1)K}).
\end{split}
\end{small}
\end{eqnarray}
Since $r \geq s$, we have
\begin{eqnarray*}
\begin{split}
& 8r(s-1)K < 8rsK, \\
& -4(2r+s)(s-1)K > -4(2r+r)sK=-12rsK.
\end{split}
\end{eqnarray*}
Since $rs <K$ from \eqref{eq-value-K},
$$s^2(s-1)^2 \leq r^2s^2 < rsK.$$
Hence
\begin{eqnarray*}
\begin{split}
H_1(r,s)
& <  \frac{r+s-1}{r} (1+\frac{8rsK+rsK}{8K^2-12rsK})\\
& =  \frac{r+s-1}{r} (1+\frac{9rs}{8K-12rs})\\
& \leq \frac{r+s-1}{r} (1+\frac{9rs}{8\frac{3rs(7r-s+1)}{8(r-s+1)}-12rs})\\
& =2,\\
\end{split}
\end{eqnarray*}
where the second from last formula holds since $K \geq \frac{3rs(7r-s+1)}{8(r-s+1)}$.

This completes the proof.

\section*{Appendix B: Proof of Theorem 4}

With the aid of a computer, one can show that $H_2(r,s) \leq 2.1$ holds for
all the positive integers satisfying condition 1) of  Theorem \ref{th-scheme2}.

Similar to the processes obtaining \eqref{eq-H-1-main}, one can obtain that
\begin{eqnarray*}
\label{eq-H-2-main}
\begin{split}
H_2(r,s)
& \leq \frac{r+s-1}{r-1} (1+\frac{8r(s-1)K+s^2(s-1)^2}{8K^{2}-4(2r+s)(s-1)K}).
\end{split}
\end{eqnarray*}

Since $r\geq s+2$ and $K \geq  \frac{(111r-15s-111)rs}{44r-40s-44}$, we have $rs <K$.
Then
$$s^2(s-1)^2 \leq r^2s^2 < rsK.$$
We also obtain that
\begin{eqnarray*}
\begin{split}
& 8r(s-1)K < 8rsK, \\
& -4(2r+s)(s-1)K > -4(2r+r)sK=-12rsK.
\end{split}
\end{eqnarray*}
Hence
\begin{eqnarray*}
\begin{split}
H_1(r,s)
& <  \frac{r+s-1}{r-1} (1+\frac{8rsK+rsK}{8K^2-12rsK})\\
& =  \frac{r+s-1}{r-1} (1+\frac{9rs}{8K-12rs})\\
& < \frac{r+s-1}{r-1} (1+\frac{9rs}{8\frac{(111r-15s-111)rs}{44r-40s-44}-12rs})\\
& =2.1.
\end{split}
\end{eqnarray*}
So $H_2(r,s) \leq 2.1$ holds for all the positive integers satisfying condition 2) of Theorem \ref{th-scheme2}.

We now consider the condition 3) of Theorem \ref{th-scheme2}.
We take $(r,s)=(4,2)$ as an example.
According to Theorem \ref{th-scheme2}-2)
for any positive integer $K \geq \frac{(111r-15s-111)rs}{44r-40s-44}=46.62$,
$H_2(r,s)\leq 2.1$ holds.
For any positive integer $K \leq  46$,
we can obtain $H_2(r,s)\leq 2.1$ from Theorem \ref{th-scheme2}-1).
Similarly, we can prove $H_2(r,s)\leq 2.1$ holds for any other pairs $(r,s)$
satisfying condition 3) of Theorem \ref{th-scheme2}.

This completes the proof.

\EOD

\end{document}